\newtheorem{theorem}{Theorem}
\theoremstyle{plain}
\newtheorem{corollary}{Corollary}
\newtheorem{definition}{Definition}
\newtheorem{example}{Example}
\newtheorem{lemma}{Lemma}
\numberwithin{equation}{section}
\begin{document}

\begin{center}
\bigskip {\Huge All quantum expectation values as classical statistical mean
values.}

\bigskip

{\LARGE Antonio Cassa}

antonio.cassa@tin.it

http://xoomer.virgilio.it/cassa\_lazzereschi/antonio/welcome.htm

\bigskip

\bigskip {\Large Abstract}
\end{center}

Given a physical quantum system described by a Hilbert space\textit{\ }$%
\mathcal{H}$, for any bounded quantum observable (a bounded self-adjoint
operator) $T$ it is possible to define several ''hidden observable''
functions $f:\mathcal{H\rightarrow }\mathbb{R}$ associated to $T$ and for
any quantum mixed state (a density matrix) $D$ it is possible to define
several ''hidden mixed states'' (probability measures) $\mu $ on $\mathcal{H}
$ associated to $D$ in such a way that the following equality is verified: 
\begin{equation*}
Trace\left[ b(T)\cdot D\right] =\int_{\mathcal{H}}b(f(\psi ))\cdot d\mu
(\psi )
\end{equation*}

whatever is the continuous function $b:\mathbb{R\rightarrow R}$.

This formula gives a general way to express any expectation value computable
in a quantum theory as a classical statistical mean value.

\bigskip

\section{\protect\Large Introduction}

This article is a mathematical paper giving another way to express all the
expectation values and all the probabilities of a quantum theory but, in the
same time, is a hidden variable theory avoiding all known no-go theorems.

For a better comprehension of our results we suggest a list of hypothesis to
keep in mind; in the following we will constantly refer to a quantum system
described by a separable, infinite-dimensional Hilbert space $\mathcal{H}$.

\begin{enumerate}
\item  When we prepare the system in a given pure quantum state, given by a
complex one-dimensional subspace, actually we prepare the system in a \emph{%
\ hidden state} described by a non-zero vector $\psi $ in the assigned
complex line (the \emph{apparent pure state}).

We suppose the existence of a measure $\eta _{\mathbb{C\cdot \psi }}$ on
every complex line $\mathbb{C\cdot \psi }$ expressing the probability to
find the apparent state $\mathbb{C\cdot \psi }$ in the hidden state $\psi $.

\item  When we prepare a measure apparatus corresponding to a precise
quantum observable, a self-adjoint operator $T$ (here supposed bounded, for
simplicity), actually we prepare a measure apparatus corresponding to a 
\emph{hidden observable} described by a function $f:\mathcal{H}\rightarrow 
\mathbb{R}$ giving the values effectively observed.

The \emph{apparent observable} $T$ and the hidden observable $f$ are joined
by the condition: 
\begin{equation*}
\left\langle E_{B}^{T}\right\rangle _{\psi }=\eta _{\mathbb{C\cdot \psi }%
}(f^{-1}(B)\cap \mathbb{C\cdot \psi )}
\end{equation*}

expressing the equality between two ways to compute the probability that the
observable will give a value falling in the borel subset $B$ of $\mathbb{R}$
for the apparent state $\mathbb{C\cdot \psi }$.

\item  When we perform a quantum test, described by a projector $E$,
actually we identify a subset $L$ of $\mathcal{H}$ (a \emph{hidden test }or
a \emph{hidden proposition}); the hidden test $L$ and the \emph{apparent test%
} $E$ are joined by the equality: 
\begin{equation*}
\left\langle E\right\rangle _{\psi }=\eta _{\mathbb{C\cdot \psi }}(L\cap 
\mathbb{C\cdot \psi )}
\end{equation*}

between two expressions of the probability that the test will receive an
affirmative answer for the apparent state $\mathbb{C\cdot \psi }$.

\item  When we prepare a mixed quantum state described by a density matrix $%
D $, actually we preapare the state in a hidden mixed state described by a
probability measure $\mu $ on $\mathcal{H}$. The \emph{apparent mixed state} 
$D$ and the hidden mixed state $\mu $ are joined by the relation: 
\begin{equation*}
Trace\left[ E\cdot D\right] =\mu (L)
\end{equation*}

for every couple of a hidden test $L$ and the corresponding apparent test $E$%
, giving the equality between two expressions of the probability that the
test will receive an affirmative answer for the apparent mixed state $D$.

\item  When an apparent observable $T$ acts on an apparent mixed state $D$
actually a hidden observable $f$ acts on a hidden mixed state $\mu $ and
these objects are joined together by the condition: 
\begin{equation*}
Trace\left[ T\cdot D\right] =\int f\cdot d\mu
\end{equation*}

expressing the equality \ between two ways to compute the expectation value
for the observable acting on the apparent mixed state $D$

\item  A hidden observable $f$ corresponding to the self-adjoint operator $T$
takes almost all its values in $spec\left[ T\right] $, that is it can take
values out of $spec\left[ T\right] $ only on a subset of $\mathcal{H}$ of
measure zero; so changing the function $f$ on a set of measure zero you can
always get all its values in $spec\left[ T\right] $.
\end{enumerate}

It is important to declare that these results depend, in our opinion,
vitally on the hypothesis that behind an apparent observable there are
several hidden observables: we don't consider possible to choose a function
for every self-adjoint operator in a reasonable way (cfr. section 3).

This places the present article strongly inside the contextual position: the
experimental values observed depend not only on the variety of hidden states
behind an apparent state but also on the existence of several hidden
observables behind the apparent observable considered, each corresponding to
a different experimental context (cfr. [K-S], Ghirardi in [B] 4.6.5, [G-D]).

In particular we don't consider the sum $f+g$ of two hidden observables $f$
and $g$ is, in general, again a hidden observable even when the observables
correspond to compatible (commuting) operators.We push toward a renforcement
and a clarification of what should be called a ''context'': observable
functions in the same context can be summed, functions in different contexts
cannot and functions corresponding to non compatible observables are never
in the same context.

This is compensated however by the possibility to find always summable
hidden observables for compatible operators and more generally by the
possibility to find an algebra of hidden functions corresponding to an
assigned commutativa algebra of operators.

In this way we avoid to fall in the hypothesis that bring to some no-go
theorem (cfr. [F1], [F2], [K-S],[J-P],[M1],[M2],[P]).

The last section proves, under reasonable hypothesis, that the theory
developped here is unique up to isomorphisms.

\bigskip

\begin{center}
{\Large List of Symbols}

\bigskip
\end{center}

$\nu _{F}=$ measure induced by the monotone function $F$ : \ $\nu
_{F}(]a,b])=F(b)-F(a)$

$\varphi _{\ast }\mu =$ image measure of the measure $\mu $ via $\varphi $: $%
\ \ (\varphi _{\ast }\mu )(A)=\mu (\varphi ^{-1}(A))$

$\widetilde{F}=$ quasi-inverse function of the monotone function $F$: \ $%
\widetilde{F}(s)=Inf\left\{ r:\text{ }F(r)\geq s\right\} $

$b(T)=b\circ T=$ function of the bounded self-adjoint operator $T$

$spec[T]=$ spectrum of the bounded self-adjoint operator $T$

$Trace[T]=$ trace of the bounded self-adjoint operator $T$

$\left\langle T\right\rangle _{\psi }=$ expectation value for the bounded
self-adjoint operator $T$ on $\psi $

$E_{B}^{T}=$ projector associated to the borel subset $B$ in the spectral
measure of $T$

$\mathcal{B}_{sa}(\mathcal{H})=\mathcal{\ }$vector space of bounded
self-adjoint operators of $\mathcal{H}$

$DM(\mathcal{H})=$ the set of all density matrices on $\mathcal{H}$

$PR(\mathcal{H)}$ $=$ the space of all orthogonal projection operators of $%
\mathcal{H}$

\bigskip

\section{\protect\Large The hidden observables}

\begin{center}
\bigskip
\end{center}

\textit{From now on we will fix on the borelian subsets of the set }$\mathbb{%
C}$ \textit{of complex numbers\ a probability measure }$\eta $\textit{\
without atoms and invariant by rotations.}

\textit{For such a measure it is always possible to find borel maps }$%
\varphi :\mathbb{C}\rightarrow \left] 0,1\right[ $\textit{\ such that }$%
\varphi _{\ast }\eta =\lambda $\textit{\ (where }$\lambda $\textit{\ denotes
the Lebesgue measure on }$\left] 0,1\right[ $\textit{\ ) and subsets with
any assigned measure between }$0$\textit{\ and }$1$\textit{\ (the space} $%
\mathbb{C}$ \textit{with the measure} $\eta $\textit{\ is a standard
nonatomic probability space})\textit{.}

\textit{Let }$(\mathcal{H},\left\langle ,\right\rangle )$\textit{\ be a
separable Hilbert space over }$\mathbb{C}$ \textit{of infinite dimension, on
every complex line }$\mathbb{C}\cdot \psi $\textit{\ (with \ }$\left\| \psi
\right\| =1$\textit{) there is just one probability measure }$\eta _{\mathbb{%
C\cdot \psi }}$\textit{\ such that }$\eta _{\mathbb{C\cdot \psi }}(B\cdot
\psi )=\eta (B)$\textit{\ for every borel subset }$B$\textit{\ of }$\mathbb{C%
}$\textit{. On }$\mathcal{H}$\textit{\ we will consider the }$\sigma $%
\textit{-algebra of subsets }(\textit{called} \emph{pseudo-borel subsets}) $%
A $\textit{\ such that for every complex line }$\mathbb{C}\cdot \psi $%
\textit{\ the intersection }$A\cap \mathbb{C}\cdot \psi $\textit{\ is a
borel subset of }$\mathbb{C}\cdot \psi $\textit{. Correspondingly a map }$f:%
\mathcal{H}$\textit{\ }$\rightarrow \mathbb{R}$\textit{\ will be called} 
\emph{a pseudo-borel function}\textit{\ if }$f^{-1}(B)$\textit{\ is a
pseudo-borel subset of }$\mathcal{H}\ $\textit{for every borel subset }$B$%
\textit{\ of }$\mathbb{R}$\textit{. A pseudo-borel subset }$A$\textit{\ of }$%
\mathcal{H}$\textit{\ will be called} \emph{a zero measure subset }\textit{%
if every intersection }$A\cap \mathbb{C}\cdot \psi $\textit{\ has measure
zero in }$\mathbb{C}\cdot \psi $\textit{.}

\textit{We will use in the following} $\mathcal{H\setminus }\left\{
0\right\} $ \textit{as the total space of hidden states and each} $\mathbb{C}%
\cdot \psi \mathcal{\setminus }\left\{ 0\right\} $ \textit{as the set of
hidden states behind each quantum state} $[\psi ]$ \textit{of the complex
projective space} $\mathbb{P}_{\mathbb{C}}\mathbb{(\mathcal{H})}$; \textit{%
the constant addiction we will make of the element} $0$ \textit{to these
sets should not create confusion and it is made only with the hope to
simplify the notations (if you prefer you can simply forget everywhere the
element} $0$).

\textit{You can easily check that all the theory developed in this article
works equally well if you consider as total hidden space a generic set} $%
\Lambda $ \textit{instead of} $\mathcal{H\setminus }\left\{ 0\right\} $ 
\textit{and a partition of} $\Lambda $ \textit{in a family of subsets} $%
\Lambda _{\lbrack \psi ]}$ (\textit{where} $[\psi ]$ \textit{varies in} $%
\mathbb{P}_{\mathbb{C}}\mathbb{(\mathcal{H})}$) \textit{instead of the
partition} \textit{into the lines} $\mathbb{C}\cdot \psi \mathcal{\setminus }%
\left\{ 0\right\} $ \textit{of }$\mathcal{H\setminus }\left\{ 0\right\} $, 
\textit{each }$\Lambda _{\lbrack \psi ]}$ \textit{furnished with} \textit{a }%
$\sigma $-\textit{algebra of subsets and a} \textit{probability measure }$%
\eta _{\lbrack \psi ]}$ \textit{making} $\Lambda _{\lbrack \psi ]}$ a 
\textit{standard nonatomic probability space.}

\bigskip

\begin{definition}
An essentially bounded pseudo-borel function $f:\mathcal{H\rightarrow }%
\mathbb{R}$ will be called \emph{a function with orthodox mean values } if
there exists a (unique) bounded self-adjoint operator $T$ such that : $\int_{%
\mathbb{C\cdot \psi }}f\cdot d\eta _{\mathbb{C\cdot \psi }}=\left\langle
T\right\rangle _{\psi }$ for every $\psi \in \mathcal{H\setminus }\left\{
0\right\} $.
\end{definition}

\textit{The set }$\mathcal{F}$\textit{\ of all functions with orthodox mean
values is a real vector space; the self-adjoint operator associated to a
function by the previous definition is uniquely determined. The map }$\sigma
:\mathcal{F}\rightarrow \mathcal{B}_{sa}(\mathcal{H})$\textit{\ so defined
is a real linear map.}

\textit{We will use the symbol }$\mathcal{B}$\textit{\ to denote the algebra
of all real borel functions } \textit{sending bounded subsets of }$\mathbb{R}
$\textit{\ in bounded subsets of }$\mathbb{R}$\textit{. This algebra
contains the constant functions, all the polynomials, all the continuous
functions and is closed by compositions.}

\textit{Note that for every bounded self-adjoint operator }$T$ \textit{all
the operators} $b\circ T$ \textit{(with }$b$\textit{\ in }$\mathcal{B}$%
\textit{) are well-defined bounded self-adjoint operators.}

\begin{definition}
A function $f$ with orthodox mean values will be called a \emph{(essentially
bounded) hidden observable function on $\mathcal{H}$ } if for every function 
$b\ $in $\mathcal{B}$ the composition $b\circ f$ \ has orthodox mean values\
and $\sigma (b\circ f)=b\circ \sigma (f).$
\end{definition}

\textit{We will denote by }$\mathcal{O}$\textit{\ the set of all hidden
observable functions on }$\mathcal{H}$\textit{. Given a hidden observable }$%
\ f$\textit{\ all functions }$b\circ f$\textit{\ (with }$b$\textit{\ in }$%
\mathcal{B}$\textit{) are hidden observable functions. A function }$g$ 
\textit{differing only on a zero-measure subset from an observable function }%
$f$ \textit{\ is also an observable function with }$\sigma (f)=\sigma (g)$%
\textit{.}

\textit{The sum or the product of hidden observable functions is not, in
general, a hidden observable function.}

\textit{If} $f$ is \textit{an observable function with }$\sigma (f)=T$ 
\textit{then} $E_{B}^{T}=\chi _{B}\circ T=\sigma \left( \chi
_{f^{-1}(B)}\right) $ \textit{for every borel subset} $B$ of $\mathbb{R}$ 
\textit{and } $\left\langle E_{\left( -\infty ,s\right] }^{T}\right\rangle
_{\psi }=\eta _{\mathbb{C\cdot \psi }}(f^{-1}\left( -\infty ,s\right] \cap 
\mathbb{C\cdot \psi })$ \textit{for every} $s$ \textit{in} $\mathbb{R}$. 
\textit{That is the borel measure} $\nu _{F_{\psi }}$ \textit{induced by the
function} $F_{\psi }(s)=\left\langle E_{\left( -\infty ,s\right]
}^{T}\right\rangle _{\psi }$ \textit{coincides with the image measure} $%
(f\mid _{\mathbb{C\cdot \psi }})_{\ast }\eta _{\mathbb{C\cdot \psi }}$.

\begin{theorem}
\textit{\ An essentially bounded pseudo-borel function} $f:\mathcal{%
H\rightarrow }\mathbb{R}$ \textit{is a hidden observable if and only if
there exists a (unique) bounded self-adjoint operator} $T$ \textit{such that}
: $\int_{\mathbb{C\cdot \psi }}f^{n}\cdot d\eta _{\mathbb{C\cdot \psi }%
}=\left\langle T^{n}\right\rangle _{\psi }$ \textit{for every} $\psi \in 
\mathcal{H\setminus }\left\{ 0\right\} $ \textit{and every} $n\geq 0$.
\end{theorem}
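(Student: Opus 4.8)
The plan is to show that the moment condition $\int_{\mathbb{C}\cdot\psi} f^n\,d\eta_{\mathbb{C}\cdot\psi} = \langle T^n\rangle_\psi$ for all $n\geq 0$ is equivalent to the seemingly stronger requirement in the definition of a hidden observable, namely that $b\circ f$ has orthodox mean values with $\sigma(b\circ f) = b\circ T$ for \emph{every} $b\in\mathcal{B}$. One direction is immediate: if $f$ is a hidden observable with $\sigma(f)=T$, apply the definition to the monomials $b(t)=t^n\in\mathcal{B}$ to get $\sigma(f^n)=T^n$, which is exactly the stated moment identity (and uniqueness of $T$ is already recorded after Definition~1). So the work is entirely in the converse.

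For the converse, fix $\psi$ with $\|\psi\|=1$ and consider the two Borel probability measures on $\mathbb{R}$: the image measure $\mu_\psi := (f\mid_{\mathbb{C}\cdot\psi})_*\,\eta_{\mathbb{C}\cdot\psi}$ and the spectral measure $\rho_\psi := (E^T_{(-\infty,\cdot]})'\psi$, i.e.\ the measure with $\rho_\psi((-\infty,s]) = \langle E^T_{(-\infty,s]}\rangle_\psi = \langle\psi,E^T_{(-\infty,s]}\psi\rangle$. Both are supported in a common bounded interval: $\rho_\psi$ lives on $\mathrm{spec}[T]$, and $\mu_\psi$ is compactly supported because $f$ is essentially bounded (the essential bound with respect to all the line measures $\eta_{\mathbb{C}\cdot\psi}$ is a genuine uniform bound, since the pseudo-borel/zero-measure structure is fibered over the lines). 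The hypothesis says precisely that $\int t^n\,d\mu_\psi(t) = \langle T^n\rangle_\psi = \int t^n\,d\rho_\psi(t)$ for all $n\geq 0$. Since a compactly supported Borel probability measure on $\mathbb{R}$ is determined by its moments (Weierstrass approximation, or the uniqueness half of the Hausdorff moment problem), we conclude $\mu_\psi = \rho_\psi$ for every $\psi$.

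Now push this through a general $b\in\mathcal{B}$. For any such $b$, and any $\psi$,
\[
\int_{\mathbb{C}\cdot\psi} (b\circ f)\,d\eta_{\mathbb{C}\cdot\psi}
= \int_{\mathbb{R}} b(t)\,d\mu_\psi(t)
= \int_{\mathbb{R}} b(t)\,d\rho_\psi(t)
= \langle b(T)\rangle_\psi ,
\]
where the first equality is the change-of-variables formula for image measures, the second is $\mu_\psi=\rho_\psi$, and the third is the standard identity $\langle b(T)\rangle_\psi = \int b\,d\rho_\psi$ from the spectral theorem; all integrals are finite because $b$ sends the bounded support into a bounded set. Thus $b\circ f$ has orthodox mean values and $\sigma(b\circ f)=b(T)=b\circ\sigma(f)$, which is the definition of a hidden observable. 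One should also note $b\circ f$ is again essentially bounded pseudo-borel: it is pseudo-borel as a composition of a pseudo-borel map with a Borel map, and essentially bounded because $b$ maps the essential range of $f$ into a bounded set.

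The main obstacle is the uniform-boundedness/measurability bookkeeping rather than anything deep: one must be careful that "essentially bounded pseudo-borel" supplies a single interval $[-M,M]$ containing the support of every $\mu_\psi$ simultaneously, so that the moment-determinacy argument applies uniformly and the composed functions $b\circ f$ stay in the class $\mathcal{B}$-image framework. Once that is in place, the equivalence is just the moment problem plus the change-of-variables formula, both already implicit in the discussion preceding the theorem (where the case $b=\chi_{(-\infty,s]}$ was treated to identify $\nu_{F_\psi}$ with $(f\mid_{\mathbb{C}\cdot\psi})_*\eta_{\mathbb{C}\cdot\psi}$).
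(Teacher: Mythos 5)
Your proposal is correct, and it diverges from the paper's own argument at exactly one point: the passage from continuous test functions to general Borel functions in $\mathcal{B}$. Both proofs dispose of the forward direction by plugging the monomials into the definition, and both get from the moment hypothesis to $\int_{\mathbb{C}\cdot \psi }b\circ f\cdot d\eta _{\mathbb{C}\cdot \psi }=\left\langle b\circ T\right\rangle _{\psi }$ for continuous $b$ by Weierstrass approximation on an interval containing $spec[T]$. At that point the paper approximates a general $b$ in $\mathcal{B}$ by continuous functions in the $L^{1}([-N,N])$ norm, i.e.\ $L^{1}$ with respect to Lebesgue measure; you instead observe that agreement against all continuous functions forces the two compactly supported measures $(f\mid _{\mathbb{C}\cdot \psi })_{\ast }\eta _{\mathbb{C}\cdot \psi }$ and $\rho _{\psi }$ to coincide (Riesz representation, or determinacy of the Hausdorff moment problem), after which the identity for an arbitrary Borel $b$ is immediate. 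Your route is the more robust one: the measures involved are in general singular with respect to Lebesgue measure (the spectral measure $\rho _{\psi }$ typically has atoms at eigenvalues), so $L^{1}(\lambda )$-convergence $b_{n}\rightarrow b$ does not by itself control $\int b_{n}\,d\rho _{\psi }$ or $\int b_{n}\,d\mu _{\psi }$; the paper's final sentence tacitly needs exactly the measure-equality observation you make explicit, and that observation is in fact the device the author uses in the proofs of the later theorems of the same section. Your bookkeeping remarks are fine, and, as you half-note, a uniform bound over all lines is not even required for the determinacy step, since each comparison is made line by line and, once $\mu _{\psi }=\rho _{\psi }$, every $\mu _{\psi }$ is automatically supported in $spec[T]$.
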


\begin{proof}
Obviously you have the equality: $\int_{\mathbb{C\cdot \psi }}b\circ f\cdot
d\eta _{\mathbb{C\cdot \psi }}=\left\langle b\circ T\right\rangle _{\psi }$
for every polynomial function $b$; to prove the same equality for a generic
continuous function $b$ consider a sequence $\left\{ b_{n}\right\} _{n\geq
1} $ of polynomials uniformly converging to $b$ on a closed interval $[-N,N]$
containing $spec\left[ T\right] $; standard converging properties for the
integrals and for the operators imply the desired equality.

Finally to prove the equality for a generic function $b$ in $\mathcal{B}$
consider a sequence $\left\{ b_{n}\right\} _{n\geq 1}$ of continuous
functions converging in the $L^{1}([-N,N])$ norm to $b$.
\end{proof}

\begin{definition}
A pseudo-borel subset $L$ of $\mathcal{H}$ will be called a \emph{hidden
proposition} if its characteristic function is a hidden observable.
\end{definition}

\textit{We will denote by }$\mathcal{L}$\textit{\ the set of all hidden
propositions of }$\mathcal{H}$\textit{. The set }$\mathcal{L}$\textit{\ is
called the} \emph{hidden logic} \textit{of }$\mathcal{H}$.

\textit{The empty set and }$\mathcal{H}$\textit{\ are hidden propositions;
the complement of a hidden proposition is again a hidden proposition. Every
pseudo-borel zero-measure subset }$L$ \textit{of }$\mathcal{H}$ \textit{is a
hidden proposition with} $\sigma (\chi _{L})=0$.

\textit{The union or the intersection of two hidden propositions is not, in
general, a hidden proposition.}

\begin{theorem}
Let $L$ be a pseudo-borel subset of $\mathcal{H}$ with $\chi _{L}$ in $%
\mathcal{F}$, the subset $L$ is a hidden proposition if and only if \textit{%
\ the operator }$\sigma (\chi _{L})$\textit{\ is a projector of }$\mathcal{H}
$.
\end{theorem}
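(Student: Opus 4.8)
The plan is to reduce the statement to the moment characterisation of hidden observables proved in the previous theorem, exploiting the fact that a characteristic function is idempotent.

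Write $T:=\sigma(\chi_L)$, which exists and is bounded self-adjoint precisely by the hypothesis $\chi_L\in\mathcal{F}$. First I would note that $\chi_L^n=\chi_L$ for every $n\geq 1$, so that
\[
\int_{\mathbb{C\cdot\psi}}\chi_L^n\, d\eta_{\mathbb{C\cdot\psi}}\;=\;\int_{\mathbb{C\cdot\psi}}\chi_L\, d\eta_{\mathbb{C\cdot\psi}}\;=\;\left\langle T\right\rangle_\psi
\]
for all $n\geq 1$ and all $\psi\in\mathcal{H}\setminus\{0\}$, the case $n=0$ being the trivial identity $1=\left\langle I\right\rangle_\psi$. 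Feeding this into \thmref{} (the moment characterisation), $\chi_L$ is a hidden observable --- equivalently $L$ is a hidden proposition --- precisely when $\left\langle T^n\right\rangle_\psi=\left\langle T\right\rangle_\psi$ holds for every $n\geq 1$ and every $\psi$, since the operator supplied by that theorem is unique and is already pinned down to be $T$ by the case $n=1$.

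For the ''only if'' direction I would specialise to $n=2$, obtaining $\left\langle T^2-T\right\rangle_\psi=0$ for all $\psi$; as $T^2-T$ is bounded and self-adjoint with identically vanishing expectation values, the standard polarisation argument forces $T^2-T=0$. Thus $T$ is a self-adjoint idempotent, i.e. an orthogonal projector. For the converse, if $T=\sigma(\chi_L)$ is a projector then $T^n=T$ for every $n\geq 1$, so the moment identities $\left\langle T^n\right\rangle_\psi=\left\langle T\right\rangle_\psi$ hold automatically and \thmref{} immediately yields that $\chi_L$ is a hidden observable, hence $L\in\mathcal{L}$.

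I do not anticipate a genuine obstacle here: the only non-bookkeeping ingredient is the classical fact that a bounded self-adjoint operator all of whose expectation values vanish must be the zero operator, and everything else is the idempotency $\chi_L^2=\chi_L$ combined with the already-established moment theorem.
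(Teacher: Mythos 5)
Your proof is correct, and for the forward implication it is essentially the paper's: idempotency of $\chi _{L}$ forces $\sigma (\chi _{L})$ to be a self-adjoint idempotent (the paper writes $\sigma (\chi _{L})^{2}=\sigma (\chi _{L}^{2})=\sigma (\chi _{L})$ directly from the homomorphism property with $b(x)=x^{2}$, while you pass through $\left\langle T^{2}-T\right\rangle _{\psi }=0$ and polarisation; these are interchangeable). The converse is where you genuinely diverge. The paper verifies the definition of hidden observable directly: for any $b$ in $\mathcal{B}$ one has $b\circ \chi _{L}=[b(1)-b(0)]\cdot \chi _{L}+b(0)\cdot 1$, so by linearity of $\sigma $ on $\mathcal{F}$ it follows that $\sigma (b\circ \chi _{L})=[b(1)-b(0)]\cdot E+b(0)\cdot I=c\circ E=b\circ E$, where $c$ is the affine function agreeing with $b$ on $spec\left[ E\right] \subset \left\{ 0,1\right\} $. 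You instead route both directions through the moment characterisation (Theorem 1), using $E^{n}=E$ to verify all the moment identities at once. Both arguments are valid; yours is more uniform, since a single theorem drives the whole equivalence, but it inherits the polynomial/continuous/$L^{1}$ approximation machinery used to prove Theorem 1, whereas the paper's converse is self-contained, needing only the linearity of $\sigma $ and the elementary functional-calculus fact that two borel functions agreeing on the spectrum yield the same operator. One step of yours that deserves to stay explicit is the identification of the operator supplied by Theorem 1 with $T=\sigma (\chi _{L})$ via the $n=1$ moment and polarisation: you do state it, it is correct, and it is doing real work in making ``there exists $S$'' equivalent to ``$T$ itself satisfies the identities.''
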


\begin{proof}
$(\Longrightarrow )$If $\chi _{L}$ is an observable then $\sigma (\chi
_{L})^{2}=$ $\sigma (\chi _{L}^{2})=\sigma (\chi _{L})$

$(\Longleftarrow )$Let $\sigma (\chi _{L})=E$ be a projector. Whatever is $b$
in $\mathcal{B}$ we have: $b\circ \chi _{L}=[b(1)-b(0)]\cdot \chi
_{L}+b(0).1 $, then taken $c(x)=[b(1)-b(0)]\cdot x+b(0)$ the function $c$ is
in $\mathcal{B}$ and we have $\sigma (b\circ \chi _{L})=[b(1)-b(0)]\cdot
E+b(0)\cdot I=c\circ E=b\circ E$ since $b$ and $c$ take the same values on
the spectrum of $E$ (cfr. [W] ex. 7.36 pag. 210).
\end{proof}

\textit{If} $L$ \textit{and} $M$ \textit{are hidden propositions then} $%
L\subset M$ \textit{implies} $\sigma (\chi _{L})\leq \sigma (\chi _{M})$; $%
\sigma (\chi _{\complement L})=I-\sigma (\chi _{L})$; $L\cap M=\emptyset $ 
\textit{implies} $\sigma (\chi _{L})\cdot \sigma (\chi _{M})=0$. \textit{If} 
$\left\{ L_{n}\right\} _{n\geq 1}$ \textit{is a family of disjoint hidden
propositions then} $\bigcup_{n\geq 1}L_{n}$ \textit{is a hidden proposition
with} $\sigma (\chi _{\bigcup_{n\geq 1}L_{n}})=\sum_{n\geq 1}$ $\sigma (\chi
_{L_{n}})$.

\begin{theorem}
A function $f$ with orthodox mean values is a hidden observable if and only
if for every borel subset $B$ of $\mathbb{R}$ the subset $f^{-1}(B)$ is a
hidden proposition.
\end{theorem}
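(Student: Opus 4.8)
The plan is to prove the two implications separately. Throughout set $T=\sigma(f)$. The forward direction is read off directly from the definition of a hidden observable; the converse rests on the additivity of $\sigma$ on disjoint hidden propositions, on the theorem characterizing hidden propositions as those $L$ for which $\sigma(\chi_L)$ is a projector, and on the moment characterization of hidden observables proved above.

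$(\Longrightarrow)$ Suppose $f$ is a hidden observable. For a borel subset $B$ of $\mathbb{R}$ the function $\chi_B$ lies in $\mathcal{B}$, so $\chi_B\circ f=\chi_{f^{-1}(B)}$ has orthodox mean values and $\sigma(\chi_{f^{-1}(B)})=\chi_B\circ\sigma(f)=\chi_B\circ T=E_B^T$, which is a projector; by the cited theorem, $f^{-1}(B)$ is a hidden proposition.

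$(\Longleftarrow)$ Suppose $f^{-1}(B)$ is a hidden proposition for every borel $B$, and put $E_B:=\sigma(\chi_{f^{-1}(B)})$, a projector for each $B$. The first step is to check that $B\mapsto E_B$ is a bounded projection-valued measure: from $\chi_{f^{-1}(\emptyset)}=0$ and $\chi_{f^{-1}(\mathbb{R})}=1$ we get $E_{\emptyset}=0$ and $E_{\mathbb{R}}=I$; countable additivity over disjoint borel sets is exactly the stated additivity of $\sigma$ over disjoint hidden propositions; and multiplicativity $E_{B_1}E_{B_2}=E_{B_1\cap B_2}$ follows by decomposing $B_1$ and $B_2$ into the three pairwise disjoint borel pieces $B_1\cap B_2$, $B_1\setminus B_2$, $B_2\setminus B_1$ and using $\sigma(\chi_L)\sigma(\chi_M)=0$ for disjoint hidden propositions $L,M$. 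Since $f$ is essentially bounded, say $|f|\le N$ off a zero-measure subset, one has $\langle E_B\rangle_\psi=\eta_{\mathbb{C}\cdot\psi}(f^{-1}(B)\cap\mathbb{C}\cdot\psi)=0$ for all $\psi$ whenever $B\cap[-N,N]=\emptyset$, so $E_B=0$ there and $E$ is supported in $[-N,N]$. Hence $S:=\int\lambda\,dE_\lambda$ is a bounded self-adjoint operator whose spectral measure is $E$, and the change-of-variables formula gives $\langle S\rangle_\psi=\int\lambda\,d\langle E_\lambda\rangle_\psi=\int_{\mathbb{C}\cdot\psi}f\,d\eta_{\mathbb{C}\cdot\psi}=\langle T\rangle_\psi$ for every $\psi$. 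By the uniqueness of the operator attached to $f$ in $\mathcal{F}$ this forces $S=T$, so $E=E^T$ is the spectral measure of $T$. Consequently $\langle T^{n}\rangle_\psi=\int\lambda^{n}\,d\langle E_\lambda^{T}\rangle_\psi=\int_{\mathbb{C}\cdot\psi}f^{n}\,d\eta_{\mathbb{C}\cdot\psi}$ for all $n\ge0$ and all $\psi$, and the moment characterization of hidden observables yields that $f$ is a hidden observable.

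I expect the converse to be the delicate part. Knowing only that each measure $(f|_{\mathbb{C}\cdot\psi})_{\ast}\eta_{\mathbb{C}\cdot\psi}$ has first moment $\langle T\rangle_\psi$ does not suffice to identify it with the spectral distribution of $T$ in the state $\psi$; the identification becomes available precisely because each $\sigma(\chi_{f^{-1}(B)})$ is assumed to be a projector, which upgrades $B\mapsto E_B$ to an honest projection-valued measure, and a bounded projection-valued measure is the spectral measure of its own first moment and is therefore pinned down by it. So the two non-routine steps are establishing multiplicativity of $E$ from the disjointness relations and invoking the uniqueness clause of the spectral theorem to conclude $E=E^{T}$; everything else — the forward implication, and the finite and countable additivity of $E$ — is routine.
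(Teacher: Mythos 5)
Your proof is correct and follows essentially the same route as the paper's: both hinge on assembling the projectors $\sigma(\chi_{f^{-1}(\cdot)})$ into the spectral data of a bounded self-adjoint operator, identifying that operator with $\sigma(f)$, and deducing that $(f|_{\mathbb{C}\cdot\psi})_{*}\eta_{\mathbb{C}\cdot\psi}$ is the spectral distribution of $\sigma(f)$ at $\psi$. The only cosmetic differences are that the paper builds a spectral family from the half-lines $(-\infty,s]$ rather than a full projection-valued measure (so it never needs your multiplicativity check), and closes by verifying $\sigma(b\circ f)=b\circ\sigma(f)$ directly for all $b$ in $\mathcal{B}$ via change of variables instead of invoking the moment criterion of Theorem 1; these are equivalent.
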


\begin{proof}
$(\Longrightarrow )\chi _{f^{-1}(B)}=\chi _{B}\circ f$ is in $\mathcal{F}$
for every borel subset\textit{\ }$B$\textit{\ }of\textit{\ }$\mathbb{R}$ and 
$\sigma (\chi _{f^{-1}(B)})^{2}=$ $\sigma (\chi _{B}^{2}\circ f)=$ $\sigma
(\chi _{B}\circ f)$ is a projector.

$(\Longleftarrow )$Since the family $\left\{ f^{-1}\left( -\infty ,s\right]
\right\} _{s\in \mathbb{R}}$is a family of hidden propositions essentially
empty for $s$ small and essentially $\mathcal{H}$ for $s$ big, the family $%
\left\{ \sigma (\chi _{f^{-1}\left( -\infty ,s\right] })\right\} _{s\in 
\mathbb{R}}$is the spectral family of a bounded self-adjoint operator $T$ .

Therefore: $\left\langle E_{\left( -\infty ,s\right] }^{T}\right\rangle
_{\psi }=\eta _{\mathbb{C\cdot \psi }}(f^{-1}\left( -\infty ,s\right] \cap 
\mathbb{C\cdot \psi })$ for every $\ s$ in $\mathbb{R}$ and every $\psi \neq
0$. In other words the borel measure $\nu _{F_{\psi }}$ induced by the
function $F_{\psi }(s)=\left\langle E_{\left( -\infty ,s\right]
}^{T}\right\rangle _{\psi }$ coincides with the image measure $(f\mid _{%
\mathbb{C\cdot \psi }})_{\ast }\eta _{\mathbb{C\cdot \psi }}$. So we can
compute: $\int_{\mathbb{C\cdot \psi }}b\circ f\cdot d\eta _{\mathbb{C\cdot
\psi }}=\int_{\mathbb{R}}b\cdot d\left[ (f\mid _{\mathbb{C\cdot \psi }%
})_{\ast }\eta _{\mathbb{C\cdot \psi }}\right] =\int_{\mathbb{R}}b\cdot d\nu
_{F_{\psi }}=\left\langle b\circ T\right\rangle _{\psi }$whatever is $b$ in $%
\mathcal{B}$ and we can state that all the functions $b\circ f$ have
orthodox mean values with $\sigma (b\circ f)=b\circ T=b\circ \sigma (f)$.
\end{proof}

\begin{theorem}
For every self-adjoint bounded operator $T$ it is possible to find a hidden
observable $f$ such that $\sigma (f)=T$.
\end{theorem}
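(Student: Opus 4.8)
The plan is to build the function $f$ one complex line at a time, forcing it to carry $\eta _{\mathbb{C}\cdot \psi }$ onto the spectral distribution of $T$ at $\psi $, and then to invoke the moment characterisation of hidden observables proved above. For a unit vector $\psi $ let $F_{\psi }(s):=\left\langle E_{\left( -\infty ,s\right] }^{T}\right\rangle _{\psi }$ be the (bounded, non-decreasing, right-continuous) spectral distribution function of $T$ at $\psi $, with induced probability measure $\nu _{F_{\psi }}$ on $\mathbb{R}$; by the spectral theorem $\nu _{F_{\psi }}(B)=\left\langle E_{B}^{T}\right\rangle _{\psi }$, hence $\int_{\mathbb{R}}s^{n}\,d\nu _{F_{\psi }}=\left\langle T^{n}\right\rangle _{\psi }$ for all $n\geq 0$. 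So it would suffice to produce an essentially bounded pseudo-borel $f$ with $(f\mid _{\mathbb{C}\cdot \psi })_{\ast }\eta _{\mathbb{C}\cdot \psi }=\nu _{F_{\psi }}$ on every line: then $\int_{\mathbb{C}\cdot \psi }f^{n}\,d\eta _{\mathbb{C}\cdot \psi }=\int_{\mathbb{R}}s^{n}\,d\nu _{F_{\psi }}=\left\langle T^{n}\right\rangle _{\psi }$ for all $n$ and $\psi $, and the theorem characterising hidden observables through their moments gives that $f$ is a hidden observable, with $\sigma (f)=T$ read off from the case $n=1$.

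To realise this I would fix a Borel map $\varphi :\mathbb{C}\rightarrow \left] 0,1\right[ $ with $\varphi _{\ast }\eta =\lambda $ (available by the remark opening this section) and the quasi-inverse $\widetilde{F_{\psi }}$ of $F_{\psi }$, using the standard fact $(\widetilde{F_{\psi }})_{\ast }\lambda =\nu _{F_{\psi }}$, so that $\widetilde{F_{\psi }}\circ \varphi $ carries $\eta $ to $\nu _{F_{\psi }}$. The one genuinely delicate point is to glue these line-by-line recipes into a single honest function on $\mathcal{H}\setminus \{0\}$: writing $v=z\psi $ with $\left\| \psi \right\| =1$ leaves a phase ambiguity in the pair $(z,\psi )$, and a generic $\varphi $ is not rotation invariant. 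A rotation-invariant $\varphi $ would resolve it, but need not exist when the radial part of $\eta $ carries atoms, so instead I would fix an orthonormal basis $\{e_{k}\}$ of $\mathcal{H}$ and attach to each $v\neq 0$ the canonical unit vector $\widehat{v}$ obtained from $v/\left\| v\right\| $ by rotating its first nonzero coordinate to be real and positive; then $\widehat{v}$ depends only on the line $\mathbb{C}\cdot v$, is Borel in $v$, and $v=z(v)\widehat{v}$ with $z(v)=\left\langle \widehat{v},v\right\rangle $ Borel. One then defines $f(v):=\widetilde{F_{\widehat{v}}}\bigl( \varphi (z(v))\bigr) $.

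It then remains to check the routine points. First, $f$ is bounded by $\left\| T\right\| $, since each $F_{\psi }$ is the distribution function of a probability measure carried by $spec[T]\subseteq \left[ -\left\| T\right\| ,\left\| T\right\| \right] $, so its quasi-inverse takes values there. Second, $f$ is pseudo-borel: all nonzero vectors of a given line share the same $\widehat{\psi }$, and $\eta _{\mathbb{C}\cdot \psi }$ is by definition the image of $\eta $ under $w\mapsto w\widehat{\psi }$, so the restriction of $f$ to that line is the Borel map $w\widehat{\psi }\mapsto \widetilde{F_{\widehat{\psi }}}(\varphi (w))$ (composition of the Borel $\varphi $ with the monotone $\widetilde{F_{\widehat{\psi }}}$). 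Third, for every $\psi \neq 0$ and $n\geq 0$, using $\varphi _{\ast }\eta =\lambda $, the identity $(\widetilde{F_{\psi }})_{\ast }\lambda =\nu _{F_{\psi }}$, and the spectral theorem, $\int_{\mathbb{C}\cdot \psi }f^{n}\,d\eta _{\mathbb{C}\cdot \psi }=\int_{\left] 0,1\right[ }\widetilde{F_{\psi }}(t)^{n}\,d\lambda (t)=\int_{\mathbb{R}}s^{n}\,d\nu _{F_{\psi }}(s)=\left\langle T^{n}\right\rangle _{\psi }$ (for non-unit $\psi $ nothing changes, since $\widehat{\psi }$, $F_{\widehat{\psi }}$ and $\eta _{\mathbb{C}\cdot \psi }$ depend only on the line and $\left\langle T^{n}\right\rangle _{\psi }$ is the normalised expectation). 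Thus the main obstacle is entirely the measurable de-phasing across lines, plus having the quasi-inverse push-forward lemma in hand; after that the conclusion is immediate from the already-proved moment criterion.
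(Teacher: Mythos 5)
Your construction is correct and is essentially the paper's own proof: both take the quasi-inverse $\widetilde{F_{\psi }}$ of the spectral distribution function, compose it with a measure isomorphism carrying $\eta _{\mathbb{C}\cdot \psi }$ onto Lebesgue measure on $\left] 0,1\right[ $, observe that the restriction of $f$ to each line then pushes $\eta _{\mathbb{C}\cdot \psi }$ onto $\nu _{F_{\psi }}$, and conclude via the moment/distribution criterion established earlier. The only difference is cosmetic: the paper simply fixes an arbitrary borel map $\gamma \mid _{\mathbb{C}^{\ast }\cdot \psi }$ on each punctured line (pseudo-borel measurability being a line-by-line condition, no cross-line coherence is needed), whereas you make that choice canonical by de-phasing through a distinguished unit representative $\widehat{v}$ --- a harmless refinement that the paper's framework does not actually require.
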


\begin{proof}
Let spec$\left[ T\right] \subset \left[ -A,A\right] $, for every $\psi \neq
0 $ the monotone function $F_{\psi }(s)=\left\langle E_{\left( -\infty ,s%
\right] }^{T}\right\rangle _{\psi }$is $0$ before $-A$ and $1$ after $+A$,
therefore its quasi-inverse $\widetilde{F_{\psi }}$ is absolutely bounded by 
$A$ and has the property: $(\widetilde{F_{\psi }})_{\ast }\lambda _{\left]
0,1\right[ }=\nu _{F_{\psi }}$(cfr. \ [K and S]\ \ thm. 4 p. 94)

Let's fix for every complex line $\mathbb{C\cdot \psi }$ in $\mathcal{H}$ a
borel map $\gamma \mid _{\mathbb{C}^{\ast }\mathbb{\cdot \psi }}:(\mathbb{%
C\setminus }\left\{ 0\right\} )\cdot \mathbb{\psi \rightarrow }\left] 0,1%
\right[ $ such that $(\gamma \mid _{\mathbb{C}^{\ast }\mathbb{\cdot \psi }%
})_{\ast }\eta _{\mathbb{C\cdot \psi }}=\lambda _{\left] 0,1\right[ }$.
Therefore the function $f:\mathcal{H\rightarrow }\mathbb{R}$ defined by $%
f(\psi )=\left( \widetilde{F_{\psi }}\circ \left( \gamma \mid _{\mathbb{C}%
^{\ast }\mathbb{\cdot \psi }}\right) \right) (\psi )$ when $\psi $ is in $(%
\mathbb{C\setminus }\left\{ 0\right\} )\cdot \mathbb{\psi }$ and defined $0$
in the vector $0$ is absolutely bounded by $A$ and it verifies: $(f\mid _{%
\mathbb{C\cdot \psi }})_{\ast }\eta _{\mathbb{C\cdot \psi }}=\nu _{F_{\psi
}} $ for every line $\mathbb{C\cdot \psi }$. Proceeding as in the previous
proof this implies that all the functions $b\circ f$ have orthodox mean
values and moreover $\sigma (b\circ f)=b\circ T=b\circ \sigma (f)$. That is $%
f$ is a hidden observable and $\sigma (f)=T$.
\end{proof}

\textit{Remembering the definition of a quasi-inverse function, the
observable }$f$ \textit{defined in the previous proof is given explicitally
by the expression:} 
\begin{equation*}
f_{\gamma }(\psi )=\min \left\{ r\in \mathbb{R:}\left\langle E_{\left(
-\infty ,r\right] }^{T}\right\rangle _{\psi }\geq \gamma \left( \psi \right)
\right\}
\end{equation*}

\textit{We could prove that this expression is pratically exhaustive:
infact, assigned the operator }$T$ \textit{and a function }$f$\textit{\ such
that }$\sigma (f)=T$\textit{, it is possible to find a map }$\gamma $\textit{%
\ such that }$f=f_{\gamma }$ \textit{(up to a zero measure set) . However we
will not present here the proof of this theorem since we will not need this
property in the following.}

\begin{theorem}
Given a self-adjoint bounded operator $T$ a hidden observable $f$ such that $%
\sigma (f)=T$ modified on a set of measure zero verifies $\overline{f(%
\mathcal{H})}=spec\left[ T\right] $.
\end{theorem}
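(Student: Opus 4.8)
The plan is to prove the two inclusions $\overline{f(\mathcal{H})}\subseteq spec[T]$ and $spec[T]\subseteq\overline{f(\mathcal{H})}$ separately, the first after one zero-measure modification of $f$ and the second being automatic for any hidden observable with $\sigma(f)=T$. The workhorse is the identity recorded above: since $f$ is a hidden observable with $\sigma(f)=T$, for every borel $B\subseteq\mathbb{R}$ one has $\chi_{f^{-1}(B)}=\chi_B\circ f$ with $\sigma(\chi_{f^{-1}(B)})=\chi_B\circ T=E_B^T$; evaluating orthodox mean values on a line then gives $\langle E_B^T\rangle_\psi=\eta_{\mathbb{C\cdot\psi}}(f^{-1}(B)\cap\mathbb{C\cdot\psi})$, so if $E_B^T=0$ the set $f^{-1}(B)$ meets every complex line in a set of measure zero, i.e.\ it is a zero-measure pseudo-borel subset of $\mathcal{H}$.

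For $\overline{f(\mathcal{H})}\subseteq spec[T]$ (which also reproves the ``almost all values in $spec[T]$'' claim of the introduction): cover the open set $\mathbb{R}\setminus spec[T]$ by the countably many half-open rational intervals $(p,q]$ with $[p,q]\cap spec[T]=\emptyset$. For each such interval $E_{(p,q]}^T=0$, hence $f^{-1}((p,q])$ is zero-measure; therefore $N:=f^{-1}(\mathbb{R}\setminus spec[T])$, a countable union of zero-measure pseudo-borel sets, is again a zero-measure pseudo-borel set. Redefine $f$ to be a fixed value $\lambda_0\in spec[T]$ (the spectrum is nonempty, $\mathcal{H}$ being infinite-dimensional) on all of $N$. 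The modified function differs from the original only on $N$, so by the invariance of $\mathcal{O}$ and of $\sigma$ under modification on a zero-measure set (noted after the definition of hidden observable function) it is still a hidden observable with $\sigma(f)=T$; now $f(\mathcal{H})\subseteq spec[T]$, which is closed, whence $\overline{f(\mathcal{H})}\subseteq spec[T]$.

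For $spec[T]\subseteq\overline{f(\mathcal{H})}$: let $\lambda\in spec[T]$ and suppose, for contradiction, $\lambda\notin\overline{f(\mathcal{H})}$; then $f^{-1}((\lambda-\varepsilon,\lambda+\varepsilon))=\emptyset$ for some $\varepsilon>0$, and a fortiori $f^{-1}((\lambda-\varepsilon/2,\lambda+\varepsilon/2])=\emptyset$. Hence $\chi_{f^{-1}((\lambda-\varepsilon/2,\lambda+\varepsilon/2])}$ is identically zero, so $E_{(\lambda-\varepsilon/2,\lambda+\varepsilon/2]}^T=\sigma(0)=0$. But $\lambda\in spec[T]$ forces $E_U^T\neq 0$ for every open $U$ meeting the spectrum, in particular $E_{(\lambda-\varepsilon/2,\lambda+\varepsilon/2)}^T\neq 0$, and this projection is dominated by $E_{(\lambda-\varepsilon/2,\lambda+\varepsilon/2]}^T$ — a contradiction. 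Thus $\lambda\in\overline{f(\mathcal{H})}$, and combining with the previous paragraph $\overline{f(\mathcal{H})}=spec[T]$ for the modified $f$.

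The difficulty here is essentially bookkeeping, not analysis: one must verify that $N$ is genuinely a pseudo-borel zero-measure subset (the countable-union argument for both the $\sigma$-algebra and the measure), that each $f^{-1}((p,q])$ carries $\sigma$ equal to the corresponding spectral projection, and that redefining $f$ on $N$ keeps it in $\mathcal{O}$ with the same $\sigma$. The only external input is the standard description of $spec[T]$ as the support of the spectral measure of $T$ (namely $E_U^T\neq 0$ for every open $U$ meeting $spec[T]$, and $E_S^T=0$ for every borel $S$ disjoint from it); no quantitative estimate is required.
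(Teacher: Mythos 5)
Your proof is correct and follows essentially the same route as the paper: both rest on the key fact that $E_{B}^{T}=0$ forces $f^{-1}(B)$ to be a zero-measure pseudo-borel subset, both modify $f$ on such a null set using the remark that this preserves membership in $\mathcal{O}$ and the value of $\sigma$, and both finish with the characterization of $spec\left[ T\right]$ as the support of the spectral measure. The only organizational difference is that you identify the exceptional set directly as $f^{-1}(\mathbb{R}\setminus spec\left[ T\right] )$ via an explicit countable rational cover and send it to a point of $spec\left[ T\right]$, then prove the two inclusions separately, whereas the paper takes the largest open $W$ with null preimage and a new value in $f(\mathcal{H})\setminus W$ and runs a single chain of equivalences; your variant makes the countable-union (Lindel\"{o}f) step explicit and avoids having to justify the existence of that maximal $W$, but the underlying argument is the same.
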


\begin{proof}
There exists a biggest open subset $W$ of $\mathbb{R}$ such that $f^{-1}(W)$
is a zero measure subset of $\mathcal{H}$. Since $f(\mathcal{H)\setminus }W$
is not empty we can redefine the function $f$ on $f^{-1}(W)$ with a value
chosen in $f(\mathcal{H)\setminus }W$; this new function $f$ is again an
observable with $\sigma (f)=T$ and moreover does not take values in $W$,
that is does not allow non-empty open subsets $U$ of $\mathbb{R}$ with $%
f^{-1}(U)$ of zero measure. A value $y$ of $\mathbb{R}$ is not in $spec\left[
T\right] $ if and only if $E_{\left] y-\varepsilon ,y+\varepsilon \right[
}^{T}=0$ for a suitable $\varepsilon >0$ that if and only if $f^{-1}\left]
y-\varepsilon ,y+\varepsilon \right[ $ is a zero measure subset: but, for
this new function $f$ , this is equivalent to $f^{-1}\left] y-\varepsilon
,y+\varepsilon \right[ =\emptyset $ and to $y\notin \overline{f(\mathcal{H})}
$.
\end{proof}

\begin{corollary}
For every projector $E$ there exists a proposition $L$ with $\sigma (\chi
_{L})=E$.
\end{corollary}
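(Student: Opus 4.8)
The plan is to deduce this directly from the preceding theorems, using only the special feature of a projector that its spectrum lies in $\{0,1\}$.

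First, a projector $E\in PR(\mathcal{H})$ is in particular a bounded self-adjoint operator, so the theorem producing a hidden observable for every such operator yields a hidden observable $f:\mathcal{H}\rightarrow\mathbb{R}$ with $\sigma(f)=E$. Next I would invoke the theorem that permits redefining $f$ on a zero-measure subset so that $\overline{f(\mathcal{H})}=spec[E]$; this redefinition preserves the property of being a hidden observable and leaves $\sigma(f)$ unchanged. Since $spec[E]\subseteq\{0,1\}$ is closed, we get $f(\mathcal{H})\subseteq\overline{f(\mathcal{H})}=spec[E]\subseteq\{0,1\}$, i.e. the modified $f$ takes only the values $0$ and $1$. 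Hence $f=\chi_{L}$ where $L:=f^{-1}(\{1\})$, which is a pseudo-borel subset of $\mathcal{H}$ as the preimage of a borel set under a pseudo-borel function.

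Finally, $\chi_{L}=f$ is a hidden observable, so by the very definition of a hidden proposition $L$ is a hidden proposition, and $\sigma(\chi_{L})=\sigma(f)=E$. (When $E=0$ or $E=I$ this simply returns $L=\emptyset$ or $L=\mathcal{H}$, which are already known to be hidden propositions.) I do not expect any genuine obstacle here: the entire argument rests on results already in hand. The only point requiring a little care is to pass first to the modified representative of $f$, so that $f$ literally takes values in $\{0,1\}$ rather than merely having its range dense in $\{0,1\}$; once that is secured, the identification $f=\chi_{f^{-1}(\{1\})}$, and with it the conclusion, is immediate.
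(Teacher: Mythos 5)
Your proposal is correct and follows essentially the same route as the paper: take a hidden observable $f$ with $\sigma(f)=E$, use the preceding theorem to modify it so that $f(\mathcal{H})\subseteq spec[E]\subseteq\{0,1\}$, and observe that $f$ is then the characteristic function of $L=f^{-1}(\{1\})$. Your extra care about the degenerate cases $E=0$ and $E=I$ (where $spec[E]$ is a single point) is a small refinement the paper glosses over, but it changes nothing substantive.
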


\begin{proof}
Let $f$ be an observable such that $\sigma (f)=E$ with $f(\mathcal{H)\subset 
}spec\left[ E\right] =\left\{ 0,1\right\} $, the function $f$ is the
characteristic function of the proposition $L=f^{-1}(\left\{ 1\right\} )$.
\end{proof}

\textit{Note that for every hidden observable function} $f$ \textit{the set} 
$\mathcal{H\setminus }f^{-1}(spec\left[ \sigma (f)\right] $ \textit{is a
zero measure subset of} $\mathcal{H}$.

\section{\protect\bigskip {\protect\Large Algebras and contexts}}

\bigskip

\textit{Let's imagine to be able to build an apparatus suitable to measure
one or several quantities of the hidden system in a deterministic way (that
is you get for a given ''observable'' on a given ''hidden state'' always the
same value); this defines a precise experimental context and a family }$%
\mathcal{C}$ \textit{of all possible ''observable functions'' on} \textit{%
the total space of hidden states associated to that given experimental
context.}

\textit{In these hypothesis if you can measure} $f(\psi )$ \textit{and} $%
g(\psi )$ \textit{you can also compute} $f(\psi )+g(\psi )$\textit{, } $%
f(\psi ).g(\psi )$ \textit{and }$k\cdot f(\psi )$ \textit{for every constant 
}$k$, \textit{so }$\mathcal{C}$ \textit{must be a commutative algebra of
functions. Moreover nothing can prevent you to compute} $b(f(\psi ))$ 
\textit{where} $b$ \textit{is any available real function, therefore it is
not rescrictive to suppose} $\mathcal{C}$ \textit{also closed with respect
to the composition with the functions} $b$ \textit{of} $\mathcal{B}$. 
\textit{\ So this kind of algebra is, at least from a mathematical
viewpoint, representative of the choice of an} \textit{experimental context. 
}

\begin{example}
Let $\left\{ L_{n}\right\} _{n\geq 1}$ be a family of (non-zero measure)
pairwise disjoint hidden propositions of $\mathcal{H}$. Let's define: 
\begin{equation*}
\mathcal{C=}\left\{ f:\mathcal{H\rightarrow }\mathbb{R};\text{ \ }%
f=\sum_{n\geq 1}c_{n}\cdot \chi _{L_{n}}\text{ with }\left\{ c_{n}\right\}
_{n\geq 1}\text{ bounded}\right\}
\end{equation*}
the family $\mathcal{C}$ is an algebra of hidden observable functions closed
by the compositions with the borel functions $b$ in $\mathcal{B}$ . The
image of $\mathcal{C}$ via $\sigma $ is the commutative algebra of bounded
self-adjoint operators: 
\begin{equation*}
\mathcal{A}=\left\{ T;\text{ \ }T=\sum_{n\geq 1}c_{n}\cdot \sigma (\chi
_{L_{n}})\text{ with }\left\{ c_{n}\right\} _{n\geq 1}\text{ bounded}\right\}
\end{equation*}
and $\sigma |:\mathcal{C\rightarrow A}$ is an isomorphism of algebras.
\end{example}

\bigskip

\textit{The following theorem puts a strong limit to the existence of such
mathematical objects }$\mathcal{C}$\textit{.}

\bigskip

\begin{theorem}
Let $\mathcal{C}$ be an algebra of functions on $\mathcal{H}$ closed with
respect to the composition with the functions of $\mathcal{B}$, the
following alternative holds:

\begin{itemize}
\item  $\mathcal{C}$ is not contained in $\mathcal{O}$

or

\item  $\sigma (\mathcal{C})$ is a commutative family of bounded
self-adjoint operators.
\end{itemize}
\end{theorem}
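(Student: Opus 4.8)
The plan is to argue by contraposition: assume $\sigma(\mathcal{C})$ is \emph{not} a commutative family, and deduce that $\mathcal{C}$ cannot be contained in $\mathcal{O}$. So pick $f,g \in \mathcal{C}$ with $T := \sigma(f)$ and $S := \sigma(g)$ such that $TS \neq ST$. Since $\mathcal{C}$ is an algebra, the products $fg$ and $gf$ coincide (functions multiply pointwise), hence $f g = g f \in \mathcal{C}$; more usefully, all symmetrized polynomials in $f$ and $g$ lie in $\mathcal{C}$. If $\mathcal{C} \subseteq \mathcal{O}$, then every such function is a hidden observable, and by Theorem 3 (the moment criterion) we would have $\int_{\mathbb{C}\cdot\psi} (\text{poly in } f,g)^n \, d\eta_{\mathbb{C}\cdot\psi} = \langle (\text{same poly in } T,S)^n\rangle_\psi$ for all $\psi$ and all $n$. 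The idea is to extract enough moment identities to force $T$ and $S$ to commute.

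The key step is to exploit that $\sigma: \mathcal{F} \to \mathcal{B}_{sa}(\mathcal{H})$ is \emph{linear} while composition $\mathcal{C} \ni h \mapsto h^2$ corresponds under $\sigma$ to $h \mapsto \sigma(h)^2$ (valid since $h \in \mathcal{O}$, by the defining property $\sigma(b\circ h) = b\circ\sigma(h)$ with $b(x)=x^2$). Apply this to $h = f+g \in \mathcal{C}$: on one hand $\sigma((f+g)^2) = \sigma(f+g)^2 = (T+S)^2 = T^2 + TS + ST + S^2$; on the other hand $(f+g)^2 = f^2 + 2fg + g^2$ pointwise, and each of $f^2, fg, g^2$ lies in $\mathcal{C} \subseteq \mathcal{O} \subseteq \mathcal{F}$, so by linearity of $\sigma$ we get $\sigma((f+g)^2) = \sigma(f^2) + 2\sigma(fg) + \sigma(g^2) = T^2 + 2\sigma(fg) + S^2$. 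Comparing the two expressions yields $TS + ST = 2\,\sigma(fg)$, so $\sigma(fg) = \tfrac12(TS+ST)$ is forced to be self-adjoint — which it automatically is — but more importantly $\sigma(fg)$ is symmetric in $f,g$. Now I need a second relation that is \emph{antisymmetric}. I would feed a cleverer polynomial $b$ into $h = f+g$ or into $h = f + \mu g$ for a real parameter $\mu$, and track the coefficient of $\mu$. For instance, comparing $\sigma((f+\mu g)^2)$ computed both ways gives $\sigma(fg) = \tfrac12(TS+ST)$ again (the $\mu^1$ coefficient). To break symmetry I would go to degree three: $h=(f+\mu g)$, $b(x)=x^3$, so $\sigma(h^3)=\sigma(h)^3=(T+\mu S)^3$; expanding both the operator cube and the pointwise cube $h^3 = f^3 + 3\mu f^2 g + 3\mu^2 f g^2 + \mu^3 g^3$, and using linearity of $\sigma$ together with the already-derived value of $\sigma(f^2 g)$-type terms (obtained by applying the degree-two identity to $h = f^2 + \mu g$, say), I expect the $\mu^1$-coefficient comparison to give an identity of the form $T^2 S + T S T + S T^2 = 3\,\sigma(f^2 g)$, while a different grouping (e.g. starting from $h = f + \mu g$ but isolating $\sigma((f^2)(g))$ via $h' = f\cdot f$ composed appropriately) gives $\sigma(f^2 g)$ in terms that, when equated, collapse to $[T,[T,S]]=0$ or directly to $[T,S]=0$.

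The main obstacle — and where I expect the real work to be — is organizing the polynomial identities so that the commutator $[T,S] = TS - ST$ is actually isolated, rather than only its symmetrizations. Each individual composition identity $\sigma(b\circ h)=b\circ\sigma(h)$ is manifestly ``symmetric'' in a way that keeps producing Jordan-product (anticommutator) relations, and it is a classical subtlety that the Jordan structure alone does not recover the full associative/commutator structure. The trick must be to use \emph{two different} functions $h_1, h_2 \in \mathcal{C}$ whose $\sigma$-images are known and whose pointwise algebraic relation in $\mathcal{C}$ forces a non-Jordan identity downstairs: concretely, $fg \in \mathcal{C}$ has a well-defined square $(fg)^2 = f^2 g^2 \in \mathcal{C}$, giving $\sigma(fg)^2 = \sigma(f^2 g^2)$, i.e. $\tfrac14(TS+ST)^2 = \sigma(f^2g^2)$; but also applying the degree-two identity to $h = f^2 + \mu g^2$ and reading the $\mu$-coefficient gives $\sigma(f^2 g^2) = \tfrac12(T^2 S^2 + S^2 T^2)$. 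Equating, $\tfrac14(TS+ST)^2 = \tfrac12(T^2S^2 + S^2T^2)$, which after expansion reads $TSTS + STST - T^2S^2 - S^2T^2 = [T,S]\,[S,T]$ up to sign — i.e. $-[T,S]^2 = [T,S]^*[T,S] \geq 0$ would force $[T,S]^2 \leq 0$; combined with the parallel identity obtained from $h=fg$ versus $h = g f$ (equal as functions) one pins $[T,S]^*[T,S]=0$, hence $[T,S]=0$, a contradiction. I would present the argument in that order: (1) derive $\sigma(fg) = \tfrac12(TS+ST)$; (2) derive $\sigma(f^2g^2) = \tfrac12(T^2S^2+S^2T^2)$; (3) derive $\sigma((fg)^2) = \sigma(fg)^2$; (4) equate and simplify to $[T,S]^*[T,S] = 0$; (5) conclude $[T,S]=0$, contradicting the choice of $f,g$, which proves the alternative.
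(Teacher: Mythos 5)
Your route is genuinely different from the paper's, but as written it has a gap at the decisive step (4). The single identity you extract from $(fg)^2=f^2g^2$, namely $\tfrac14(TS+ST)^2=\tfrac12(T^2S^2+S^2T^2)$, does \emph{not} simplify to $[T,S]^*[T,S]=0$. Writing $K=[T,S]$ (so $K^*=-K$), expanding that identity gives $K^*K=2(TKS-SKT)$, equivalently $-K^2=2T[T,S^2]+2S[S,T^2]$ up to rearrangement --- an identity relating $K^*K$ to terms that need not vanish. Your proposed rescue, ``the parallel identity obtained from $h=fg$ versus $h=gf$,'' is vacuous: $fg$ and $gf$ are literally the same function, so no new relation appears. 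The ingredient you actually need is the pair of double-commutator identities $[T,[T,S]]=0$ and $[S,[S,T]]=0$, obtained by computing $\sigma(f^2g)$ in two ways (as the Jordan product of $\sigma(f)$ with $\sigma(fg)$, and as the Jordan product of $\sigma(f^2)$ with $\sigma(g)$), and likewise for $\sigma(fg^2)$. You allude to this in the middle of your sketch but never derive it and do not use it in your final five-step chain. With it, $K$ commutes with $T$ and $S$, so $TKS-SKT=K(TS-ST)=K^2=-K^*K$, and your identity becomes $K^*K=-2K^*K$, whence $K=0$. So the strategy is salvageable, but the proof as organized does not close.

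For comparison, the paper takes a completely different and much shorter path through spectral projections rather than moments. Assuming $\mathcal{C}\subset\mathcal{O}$, it takes arbitrary Borel sets $A,B$, notes that $\chi_A\circ f$ and $\chi_B\circ g$ lie in $\mathcal{C}$ (closure under composition with $\mathcal{B}$), and builds from them a single four-valued function $h$ whose level sets are $L\setminus M$, $M\setminus L$, $L\cap M$, $\complement L\cap\complement M$, where $L=f^{-1}(A)$, $M=g^{-1}(B)$; the characteristic functions of these level sets are polynomials in $\chi_L,\chi_M$, hence in $\mathcal{C}\subset\mathcal{O}$, so $h$ is a hidden observable and $E_A^{\sigma(f)}$, $E_B^{\sigma(g)}$ appear as projectors in the one spectral measure of $\sigma(h)$, hence commute; commutativity of all spectral projections gives commutativity of $\sigma(f)$ and $\sigma(g)$. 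That argument only uses the algebra structure on characteristic functions and avoids the Jordan-versus-associative subtlety that your moment computation has to fight through. If you want to keep your algebraic approach, you must add the $\sigma(f^2g)$ and $\sigma(fg^2)$ computations explicitly; otherwise the cleaner fix is to switch to the projection-valued argument.
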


\begin{proof}
Let's suppose $\mathcal{C\subset O}$; let's take two functions $f$ and $g$
in $\mathcal{C}$ and two borel subsets $A$ and $B$ of $\mathbb{R}$. Since $f$
and $g$ are hidden observables the subsets $L=f^{-1}(A)$ and $M=g^{-1}(B)$
are two hidden propositions in $\mathcal{H}$ with $\sigma (\chi _{L})=\sigma
(\chi _{A}\circ f)=\chi _{A}\circ \sigma (f)=E_{A}^{\sigma (f)}$ and $\sigma
(\chi _{M})=E_{B}^{\sigma (g)}$. Moreover, since $\mathcal{C}$ is closed
with respect to the composition with the functions of $\mathcal{B}$, the
functions $\chi _{L}=\chi _{A}\circ f$ and $\chi _{M}=\chi _{B}\circ g$ are
in $\mathcal{C}$.

Let $h:\mathcal{H\rightarrow }\left\{ 1,2,3,4\right\} $ be the function
taking value $1$ on the elements of $L\setminus M$, value $2$ on the
elements of $M\setminus L$, value $3$ on the elements of $L\cap M$ and value 
$4$ on the set $\complement L\cap \complement M$. Since $\chi
_{h^{-1}(1)}=\chi _{L}-\chi _{L}\cdot \chi _{M}$, $\chi _{h^{-1}(2)}=\chi
_{M}-\chi _{L}\cdot \chi _{M}$, $\chi _{h^{-1}(3)}=\chi _{L}\cdot \chi _{M}$%
, $\chi _{h^{-1}(4)}=1-\chi _{L}-\chi _{M}+\chi _{L}\cdot \chi _{M}$ and $%
\chi _{h^{-1}\left\{ n_{1},...,n_{r}\right\} }=\sum \chi _{h^{-1}(n_{i})%
\text{ }}$(when $n_{1},...,n_{r}$ are distinct numbers in $\left\{
1,2,3,4\right\} $), all the possible characteristic functions $\chi
_{h^{-1}\left\{ n_{1},...,n_{r}\right\} }$ are in the algebra $\mathcal{C}$
and by hypothesis in $\mathcal{O}$. Therefore $h$ is a hidden observable
with $L=h^{-1}(\left\{ 1,3\right\} )$ and $M=h^{-1}(\left\{ 2,3\right\} )$;
then $E_{A}^{\sigma (f)}=\sigma (\chi _{L})=E_{\left\{ 1,3\right\} }^{\sigma
(h)}$ and $E_{B}^{\sigma (g)}=\sigma (\chi _{M})=E_{\left\{ 2,3\right\}
}^{\sigma (h)}$ must commute as projectors in the same spectral measure. For
the arbitrarity of $A$ and $B$ the operators $\sigma (f)$ and $\sigma (g)$
commute.
\end{proof}

\bigskip

\textit{The theorem just proved is a kind of no-go theorem since it claims
that you cannot hope to find a hidden variable theory where you can realize
an algebra of hidden observable functions associated with an ''experimental
context'' and representing non-commuting bounded self-adjoint operators.}

\textit{Alternatively you could say that such an ''experimental context''
can be imagined but with some of its associated functions out of} $\mathcal{O%
}$\textit{, that is functions whose mean values have a non-ortodox behaviour
(precisely functions} $f$ \textit{such that whatever is} $T$ \textit{bounded
self-adjoint operator there exists a non-zero vector} $\psi $ \textit{and a
non-negative integer} $n$ \textit{with} $\int_{\mathbb{C\cdot \psi }%
}f^{n}\cdot d\eta _{\mathbb{C\cdot \psi }}\neq \left\langle
T^{n}\right\rangle _{\psi }$).

\bigskip

\textit{If you don't ask to represent} \textit{non-commuting bounded
self-adjoint operators you have a positive answer:}

\begin{theorem}
Let $\mathcal{A}$ be a commutative algebra of bounded self-adjoint operators
closed by the compositions with the functions of $\mathcal{B}$ , it is
possible to find an algebra $\mathcal{C}$ of hidden observable functions
closed by the compositions with the functions of $\mathcal{B}$ such that:

\begin{enumerate}
\item  $\sigma (\mathcal{C})=\mathcal{A}$

\item  $\sigma \mid :\mathcal{C}\rightarrow \mathcal{A}$ is an algebra
homomorphism

\item  $\sigma (h)=0$ for $h$ in $\mathcal{C}$ if and only if $h$ is zero
out of a pseudo-borel null set.
\end{enumerate}
\end{theorem}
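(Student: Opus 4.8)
The plan is to represent $\mathcal{A}$ by a single countably additive joint spectral measure and then to transport the hidden states of each line $\mathbb{C}\cdot\psi$ onto that spectral model. First I would pass to the von Neumann algebra $\mathcal{A}''$ generated by $\mathcal{A}$; since $\mathcal{A}$ is commutative, $\mathcal{A}''$ is a commutative von Neumann algebra on the separable space $\mathcal{H}$, hence, by the classical $L^{\infty}$-representation theorem for such algebras, there exist a standard Borel space $\Omega$, a probability measure $\mu$ on it, and a spectral measure $E$ on the borel subsets of $\Omega$ with values in $PR(\mathcal{H})$ such that $g\mapsto\int_{\Omega}g\,dE$ is a (normal) $*$-isomorphism of $L^{\infty}(\Omega,\mu)$ onto $\mathcal{A}''$. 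Every $T\in\mathcal{A}$ is then of the form $T=\int_{\Omega}g_{T}\,dE$ for a bounded borel function $g_{T}:\Omega\to\mathbb{R}$, unique modulo $\mu$-null sets; the map $T\mapsto g_{T}$ is an injective algebra homomorphism, and, a normal $*$-homomorphism intertwining borel functional calculus, $g_{b\circ T}=b\circ g_{T}$ for every $b$ in $\mathcal{B}$. For each $\psi\ne0$ set $\mu_{[\psi]}(A)=\left\langle E(A)\right\rangle_{\psi}$: this depends only on the line $\mathbb{C}\cdot\psi$, is a borel probability measure on the standard space $\Omega$, and satisfies $\mu_{[\psi]}\ll\mu$ (if $\mu(A)=0$ then $\chi_{A}=0$ in $L^{\infty}(\Omega,\mu)$, so $E(A)=0$).

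Next, since $\mathbb{C}\cdot\psi\setminus\{0\}$ with $\eta_{\mathbb{C}\cdot\psi}$ is a standard nonatomic probability space and $(\Omega,\mu_{[\psi]})$ is a standard probability space, I would fix for every complex line a borel map $\Gamma_{[\psi]}:\mathbb{C}\cdot\psi\setminus\{0\}\to\Omega$ with $(\Gamma_{[\psi]})_{\ast}\eta_{\mathbb{C}\cdot\psi}=\mu_{[\psi]}$ (such a map exists by the isomorphism theory of standard probability spaces) and assemble these into a pseudo-borel map $\Gamma:\mathcal{H}\setminus\{0\}\to\Omega$. For $T\in\mathcal{A}$ the function $g_{T}\circ\Gamma$ is bounded and pseudo-borel, and the change of variables formula gives $\int_{\mathbb{C}\cdot\psi}(g_{T}\circ\Gamma)^{n}\,d\eta_{\mathbb{C}\cdot\psi}=\int_{\Omega}g_{T}^{\,n}\,d\mu_{[\psi]}=\left\langle T^{n}\right\rangle_{\psi}$ for all $n\ge0$ and $\psi\ne0$; hence, by the characterization of hidden observables through moments established above, $g_{T}\circ\Gamma$ is a hidden observable with $\sigma(g_{T}\circ\Gamma)=T$. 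I would then let $\mathcal{C}$ be the family of all pseudo-borel functions on $\mathcal{H}$ that agree, outside some pseudo-borel null set, with $g_{T}\circ\Gamma$ for some $T\in\mathcal{A}$; the particular borel representative chosen for $g_{T}$ is immaterial, since two representatives differ on a $\mu$-null set whose $\Gamma$-preimage is pseudo-borel null because $\mu_{[\psi]}\ll\mu$. By the earlier remark that a function differing from a hidden observable on a null set is again a hidden observable with the same image under $\sigma$, every element of $\mathcal{C}$ is a hidden observable, and $\sigma$ restricts to a well-defined map $\mathcal{C}\to\mathcal{A}$ (if $h$ agrees off null sets with both $g_{T}\circ\Gamma$ and $g_{S}\circ\Gamma$ then $T=\sigma(g_{T}\circ\Gamma)=\sigma(g_{S}\circ\Gamma)=S$).

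The three required properties then follow by unwinding the homomorphism $T\mapsto g_{T}$. Modulo pseudo-borel null sets one has $(g_{T}\circ\Gamma)(g_{S}\circ\Gamma)=g_{TS}\circ\Gamma$, $(g_{T}\circ\Gamma)+(g_{S}\circ\Gamma)=g_{T+S}\circ\Gamma$, $k\cdot(g_{T}\circ\Gamma)=g_{kT}\circ\Gamma$ and $b\circ(g_{T}\circ\Gamma)=g_{b\circ T}\circ\Gamma$, while $TS,\ T+S,\ kT,\ b\circ T$ all lie in $\mathcal{A}$; hence $\mathcal{C}$ is an algebra closed under composition with the functions of $\mathcal{B}$, $\sigma(\mathcal{C})=\mathcal{A}$, and $\sigma\mid:\mathcal{C}\to\mathcal{A}$ is an algebra homomorphism. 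For property (3), if $h\in\mathcal{C}$ vanishes off a pseudo-borel null set then $\sigma(h)=\sigma(0)=0$; conversely, if $\sigma(h)=T=0$ then $\int_{\Omega}g_{T}^{\,2}\,d\mu_{[\psi]}=\left\langle T^{2}\right\rangle_{\psi}=0$ for every $\psi$, so $\mu_{[\psi]}(\{g_{T}\ne0\})=0$, i.e. $\Gamma_{[\psi]}^{-1}(\{g_{T}\ne0\})$ is $\eta_{\mathbb{C}\cdot\psi}$-null for every line, whence $\{g_{T}\circ\Gamma\ne0\}$, and therefore $\{h\ne0\}$, is a pseudo-borel null set.

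The step I expect to be the main obstacle is the first one. The algebra $\mathcal{A}$ need not be weakly closed — it is only required to be an algebra closed under $\mathcal{B}$, so the projections it contains form a boolean, but not necessarily a $\sigma$-complete, algebra — and therefore $\mathcal{A}$ itself need not carry a countably additive joint spectral measure; passing to $\mathcal{A}''$ and invoking the standard-space representation of commutative von Neumann algebras on a separable Hilbert space is precisely what produces the measure space $(\Omega,\mu)$, and the absolute continuity $\mu_{[\psi]}\ll\mu$ is what makes all the ``modulo null set'' identifications cohere simultaneously over the whole family of lines. Everything after that is routine, given the moment criterion and the stability of hidden observables under modification on null sets.
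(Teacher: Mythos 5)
Your argument is correct, and it is structurally parallel to the paper's proof: both find a single measurable model on which the whole commutative algebra acts simultaneously, transport each line measure $\eta _{\mathbb{C\cdot \psi }}$ onto that model by a per-line borel map, and pull the model functions back. The difference is in the key input. The paper invokes von Neumann's theorem that a commuting family of self-adjoint operators on a separable Hilbert space consists of borel functions $b_{i}\circ T_{0}$ of a single (possibly unbounded) generator $T_{0}$; this puts the model on the real line, so the transport maps can be written explicitly as the quasi-inverse (quantile) construction $\widetilde{F_{\psi }}\circ \gamma $ already used in the existence theorem for a single observable, and $\mathcal{C}$ is then taken as $(\sigma \mid )^{-1}(\mathcal{A})$ inside $\mathcal{B}\circ f_{0}$. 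You instead pass to $\mathcal{A}^{\prime \prime }$, use the $L^{\infty }(\Omega ,\mu )$ representation of a commutative von Neumann algebra on a separable space, and produce the per-line maps $\Gamma _{[\psi ]}$ from the abstract isomorphism theory of standard probability spaces. For separable $\mathcal{H}$ these two inputs are essentially equivalent (the single-generator theorem is what turns your $\Omega $ into a subset of $\mathbb{R}$), so neither route is more general here; yours handles the bookkeeping more canonically --- the absolute continuity $\mu _{[\psi ]}\ll \mu $ and the null-set saturation of $\mathcal{C}$ make the independence from chosen representatives transparent, and your verification of property (3) via $\left\langle T^{2}\right\rangle _{\psi }=0$ is cleaner than the paper's --- at the price of heavier machinery, while the paper's route is more elementary and recycles its earlier explicit construction verbatim. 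Note that your $\mathcal{C}$ (the saturation of $\left\{ g_{T}\circ \Gamma \right\} $) and the paper's $(\sigma \mid )^{-1}(\mathcal{A})$ are not literally the same algebra, but both satisfy the three required properties.
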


\begin{proof}
Let $\mathcal{A=}\left\{ A_{i}\right\} _{i\in I}$, since $\mathcal{H}$ is
separable and all the operators $A_{i}$ commute each other there exists a
self-adjoint operator $T_{0}$ (may be not bounded) and a family of borel
functions $\left\{ b_{i}\right\} _{i\in I}$ such that $A_{i}$ =$b_{i}\circ
T_{0}$ for every $i$ in $I$ (cfr. [VN] and [V]). Since all the operators $%
A_{i}$ $\ $are bounded it is possible to correct the borel functions $%
\left\{ b_{i}\right\} _{i\in I}$ and take them all bounded. Let $\mathcal{A}%
_{1}=\mathcal{B\circ }T_{0}$ this is a commutative algebra of self-adjoint
operators containing $\mathcal{A}$.

Proceedings as in the proof of theorem 4 it is possible to find a pseudo
borel function $f_{0}:\mathcal{H\rightarrow }\mathbb{R}$ with $(f_{0}\mid _{%
\mathbb{C\cdot \psi }})_{\ast }\eta _{\mathbb{C\cdot \psi }}=\nu _{F_{\psi
}^{T}}$ for every complex line $\mathbb{C\cdot \psi }$ (where $F_{\psi
}^{T_{0}}(s)=\left\langle E_{\left( -\infty ,s\right] }^{T_{0}}\right\rangle
_{\psi }$ and $\nu _{F_{\psi }^{T_{0}}}(B)=\left\langle
E_{B}^{T_{0}}\right\rangle _{\psi }$ for every borel subset $B$ in $\mathbb{R%
}$).

We can prove that the function $b\circ f_{0}$ is a hidden observable for
every $b$ in $\mathcal{B}$; in fact $b_{\ast }f_{0\ast }\eta _{\mathbb{%
C\cdot \psi }}(B)=\left\langle E_{b^{-1}B}^{T_{0}}\right\rangle _{\psi
}=\left\langle E_{B}^{b\circ T_{0}}\right\rangle _{\psi }$ for every borel
subset $B$ in $\mathbb{R}$, that is $b_{\ast }f_{0\ast }\eta _{\mathbb{%
C\cdot \psi }}=\nu _{F_{\psi }^{b\circ T_{0}}}$ therefore: $\int_{\mathbb{%
C\cdot \psi }}b\circ f_{0}\cdot d\eta _{\mathbb{C\cdot \psi }}=\int_{\mathbb{%
R}}id\cdot d(b_{\ast }f_{0\ast }\eta _{\mathbb{C\cdot \psi }})=\int_{\mathbb{%
R}}id\cdot d\nu _{F_{\psi }^{b\circ T}}=$ $\left\langle b\circ
T_{0}\right\rangle _{\psi }$. This proves that every $b\circ f_{0}$ $\ $has
orthodox mean values and also that $\sigma (b\circ f_{0})=b\circ T_{0}$ for
every $b$ in $\mathcal{B}$; this proves that every $b\circ f_{0}$ $\ $is a
hidden observable.

Let $\mathcal{C}_{1}=\mathcal{B\circ }$ $f_{0}$ this is an algebra of hidden
observable funtions closed by the compositions with the functions of $%
\mathcal{B}$, the points 1. and 2. follow immediately for $\mathcal{A}_{1}$
and $\mathcal{C}_{1}$. If $\sigma (b\circ f_{0})=0$ then $\eta _{\mathbb{%
C\cdot \psi }}((b\circ f_{0})^{-1}(\mathbb{R\setminus }\left\{ 0\right\}
)\cap \mathbb{C\cdot \psi )=}0$ for every complex line $\mathbb{C\cdot \psi }
$ and $(b\circ f_{0})^{-1}(\mathbb{R\setminus }\left\{ 0\right\} )$ is a
pseudo-borel null subset of $\mathcal{H}$. This proves also the point 3.

To conclude the proof let's consider the algebra $\mathcal{C}=(\sigma \mid
)^{-1}(\mathcal{A)}$, if $f=b\circ f_{0}$ is in $\mathcal{C}$ and $c$ is in $%
\mathcal{B}$ then $c\circ f$ is in $\mathcal{C}_{1}$ with $\sigma (c\circ
f)=c\circ \sigma (f)$ in $\mathcal{A}$ and therefore $c\circ f$ is in $%
\mathcal{C}$.
\end{proof}

\textit{In the example given above the inverse map} $\Phi =\left( \sigma
|\right) ^{-1}:\mathcal{A\rightarrow C\subset O}$ \textit{is a map with the
property} $\Phi (b\circ T)=b\circ \Phi (T)$ \textit{for every} $b$ \textit{in%
} $\mathcal{B}$ \textit{and every }$T$ \textit{in} $\mathcal{A}$ \textit{and
express the possibility, at least from a mathematical viewpoint, to fix an
''experimental context'' where to measure deterministic observables
corresponding one-to-one to the observables in} $\mathcal{A}$.

\textit{\ A map of this kind extended as much as possible} \textit{would
give a general family of ''experimental contexts'' pacifically coexisting.
But it is not possible to go too far in this direction:}

\begin{theorem}
Let $\mathcal{D}$ be a subset of $\mathcal{B}_{sa}(\mathcal{H})$ closed by
the compositions with the functions of $\mathcal{B}\ $and $\Phi :\mathcal{D}%
\rightarrow \mathcal{O}$ \ a map such that: $\Phi (b\circ T)=b\circ \Phi (T)$
for every $b$ in $\mathcal{B}$ and every $T$ in $\mathcal{D}$, then $%
\mathcal{D}$ is properly contained in $\mathcal{B}_{sa}(\mathcal{H})$.
\end{theorem}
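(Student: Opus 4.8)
The plan is to prove the contrapositive: if such a $\Phi$ is defined on \emph{all} of $\mathcal{B}_{sa}(\mathcal{H})$ we derive a contradiction. So assume $\mathcal{D}=\mathcal{B}_{sa}(\mathcal{H})$, fix a unit vector $\psi_{0}\in\mathcal{H}$, and define $\omega:\mathcal{B}_{sa}(\mathcal{H})\to\mathbb{R}$ by $\omega(T):=\Phi(T)(\psi_{0})$. Evaluating the hypothesis $\Phi(b\circ T)=b\circ\Phi(T)$ at the point $\psi_{0}$ gives the single identity
\[
\omega(b\circ T)=b(\omega(T))\qquad\text{for all }b\in\mathcal{B},\ T\in\mathcal{B}_{sa}(\mathcal{H}).
\]
From this I read off the facts I will need. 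Taking $b$ constant, $\omega(\lambda I)=\lambda$; in particular $\omega(0)=0$ and $\omega(I)=1$. If $\omega(T)\notin spec[T]$, pick a bounded continuous $b$ vanishing on the compact set $spec[T]$ with $b(\omega(T))=1$; then $b\circ T=0$ yet $\omega(b\circ T)=1$, a contradiction, so $\omega(T)\in spec[T]$ always, whence $\omega(E)\in\{0,1\}$ for every projector $E$. Finally, if $E_{1},\dots,E_{k}$ are pairwise orthogonal projectors summing to a projector $S$, put $T_{0}=\sum_{i}iE_{i}$; then $E_{i}=\chi_{\{i\}}\circ T_{0}$, $S=\chi_{\{1,\dots,k\}}\circ T_{0}$, and since $\chi_{\{1,\dots,k\}}=\sum_{i}\chi_{\{i\}}$ the identity above yields $\omega(S)=\sum_{i}\omega(E_{i})$. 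So $\omega$ is additive along any finite orthogonal decomposition of a projector, and exactly one summand of a decomposition of $I$ receives the value $1$.

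The crucial step is to exhibit a \emph{one-dimensional} projector on which $\omega$ equals $1$; this is what makes the finite-dimensional Kochen--Specker obstruction applicable, since a priori $\omega$ could be ``concentrated at infinity''. Fix an orthonormal basis $\{e_{n}\}_{n\geq 1}$ of $\mathcal{H}$, let $E_{n}$ be the projector onto $\mathbb{C}e_{n}$, and consider $T_{0}:=\sum_{n\geq 1}2^{-n}E_{n}\in\mathcal{B}_{sa}(\mathcal{H})$, whose spectrum is $\{2^{-n}:n\geq 1\}\cup\{0\}$ and which has no eigenvalue $0$, so its spectral projector at $\{0\}$ vanishes. Let $c\in\mathcal{B}$ be the Borel function equal to the identity off $0$ with $c(0)=1$. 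Since $T_{0}$ has no spectral mass at $0$, $c$ agrees with the identity function $E^{T_{0}}$-a.e., so $c\circ T_{0}=T_{0}$ as operators; hence $\Phi(T_{0})=\Phi(c\circ T_{0})=c\circ\Phi(T_{0})$, which forces every value of the function $\Phi(T_{0})$ to be a fixed point of $c$, i.e.\ to be different from $0$. Therefore $\omega(T_{0})=\Phi(T_{0})(\psi_{0})\neq 0$, and combined with $\omega(T_{0})\in spec[T_{0}]$ this gives $\omega(T_{0})=2^{-n^{*}}$ for some $n^{*}$, whence $\omega(E_{n^{*}})=\chi_{\{2^{-n^{*}}\}}(\omega(T_{0}))=1$.

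To conclude I would take a three-dimensional subspace $\mathcal{K}\subset\mathcal{H}$ with $e_{n^{*}}\in\mathcal{K}$ (possible since $\dim\mathcal{H}=\infty$) and let $P$ be the projector onto $\mathcal{K}$. For any orthonormal triple $v_{1},v_{2},v_{3}$ spanning $\mathcal{K}$, writing $E_{v_{i}}$ for the projector onto $\mathbb{C}v_{i}$ one has $E_{v_{1}}+E_{v_{2}}+E_{v_{3}}=P$, so the additivity above gives $\omega(E_{v_{1}})+\omega(E_{v_{2}})+\omega(E_{v_{3}})=\omega(P)$; choosing $v_{1}=e_{n^{*}}$ shows $\omega(P)\geq 1$, hence $\omega(P)=1$. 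Thus $v\mapsto\omega(E_{v})$ is a $\{0,1\}$-valued function on the rays of $\mathcal{K}\cong\mathbb{C}^{3}$ giving the value $1$ to exactly one member of every orthonormal triple --- a colouring which the Kochen--Specker theorem (cf.\ [K-S]) shows cannot exist. This contradiction proves $\mathcal{D}\neq\mathcal{B}_{sa}(\mathcal{H})$, i.e.\ $\mathcal{D}$ is properly contained in $\mathcal{B}_{sa}(\mathcal{H})$.

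The step I expect to be the genuine difficulty is the middle one: manufacturing a rank-one projector valued $1$ via the functional-calculus perturbation at the non-atomic spectral point $0$ of $T_{0}$. The rest is either a routine consequence of the single identity $\omega(b\circ T)=b(\omega(T))$ or an appeal to the quoted Kochen--Specker theorem.
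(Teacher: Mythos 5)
Your proof is correct, and it reaches the contradiction by a genuinely different route than the paper. Both arguments start from the same move in substance: you evaluate $\Phi$ at a fixed vector $\psi _{0}$ to obtain a valuation $\omega (T)=\Phi (T)(\psi _{0})$ with $\omega (b\circ T)=b(\omega (T))$, which is exactly what the paper does when it sets $G(\psi )=\chi _{L_{E[\psi ]}}(\psi _{0})$. Moreover your fixed-point trick with the function $c$ (equal to the identity off the spectrally null point $0$ of $T_{0}=\sum_{n}2^{-n}E_{n}$) plays the same role as the paper's identity $I=\chi _{\cup \{a_{n}\}}\circ T$ for $T=\sum_{n}a_{n}\cdot E[\psi _{n}]$: both exploit that the spectral measure of the ``leftover'' set vanishes in order to force the value $1$ onto a rank-one spectral projector, i.e.\ to rule out the valuation escaping to infinity along an infinite orthogonal decomposition. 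The divergence is in the endgame: the paper shows $\sum_{n}G(\psi _{n})=1$ for \emph{every} orthonormal basis, so $G$ is a Gleason frame function of weight $1$, and then invokes Gleason's theorem together with the continuity (hence constancy on the connected unit sphere) of $\left\langle S\right\rangle $ to get $G\equiv 0$ or $G\equiv 1$, contradicting weight $1$; you instead restrict $\omega $ to a three-dimensional subspace containing the distinguished basis vector, check via finite additivity that $\omega (E_{v_{1}})+\omega (E_{v_{2}})+\omega (E_{v_{3}})=\omega (P)=1$ for every orthonormal triple spanning it, and invoke the Kochen--Specker colouring obstruction. Your route trades the analytic input (Gleason) for the finite combinatorial one (Kochen--Specker); the paper's is shorter once Gleason is granted. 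The only point you should make explicit is that the Kochen--Specker obstruction, usually stated for $\mathbb{R}^{3}$, applies to the complex three-dimensional subspace $\mathcal{K}$ by restricting your colouring to the rays in the real span of an orthonormal basis of $\mathcal{K}$, since real-orthonormal triples there are also complex-orthonormal triples spanning $\mathcal{K}$.
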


\begin{proof}
Let's suppose, conversely, that such a map $\Phi :\mathcal{B}_{sa}(\mathcal{H%
})\rightarrow $ $\mathcal{O}$ exists. For every projector $E$ let $%
f_{E}=\Phi (E)$, since $f_{E}^{2}=\Phi (E)^{2}=\Phi (E^{2})=\Phi (E)=$ $%
f_{E} $ the function $f_{E}$ is a characteristic function with $f_{E}=\chi
_{L_{E}} $ where $L_{E}$ is the hidden proposition$\ L_{E}=f_{E}^{-1}(1)$.

In particular $\Phi (I)=\chi _{L}$; if $L\subsetneqq \mathcal{H}$ let's take 
$\varphi \notin L$ and a function $b$ in $\mathcal{B}$ with $b(0)\neq 0$. We
have $\Phi (b\circ I)(\varphi )=\Phi (b(1)\cdot I)(\varphi )=\Phi \left[
(b(1)\cdot id_{\mathbb{R}})\circ I\right] (\varphi )=\left[ (b(1)\cdot id_{%
\mathbb{R}})\circ \chi _{L}\right] (\varphi )=0$ but $(b\circ \chi
_{L})(\varphi )=b(0)\neq 0$.

Therefore $L=\mathcal{H}$ and $\Phi (I)=1$.

Let's fix once for all a unit vector $\psi _{0}$ and let's define the map $G:%
\mathbb{S}(1)\rightarrow \left\{ 0,1\right\} $ given by $G(\psi )=\chi
_{L_{E[\psi ]}}(\psi _{0})$ where $E[\psi ]$ is the orthogonal projector on
the line $\mathbb{C\cdot \psi }$.

Let's consider an orthonormal base $\left\{ \psi _{n}\right\} _{n\geq 1}$ of 
$\mathcal{H}$; if $\left\{ a_{n}\right\} _{n\geq 1}$ is a bounded injective
sequence of real numbers the operator $T=\sum_{n\geq 1}a_{n}\cdot E[\psi
_{n}]$ is a bounded self-adjoint operator with $E[\psi _{n}]=E_{\left\{
a_{n}\right\} }^{T}$ $=\chi _{\left\{ a_{n}\right\} }\circ T$ for every $%
n\geq 1$. Therefore taken $f=\Phi (T)$ we get: $\chi _{f^{-1}(\left\{
a_{n}\right\} )}=\chi _{\left\{ a_{n}\right\} }\circ f=\chi _{\left\{
a_{n}\right\} }\circ \Phi (T)=\Phi (\chi _{\left\{ a_{n}\right\} }\circ
T)=\Phi (E_{\left\{ a_{n}\right\} }^{T})=\Phi (E[\psi _{n}])=\chi
_{L_{E[\psi _{n}]}}$ that \ is $L_{E[\psi _{n}]}=f^{-1}(\left\{
a_{n}\right\} )$.

Since $I=\sum_{n\geq 1}E[\psi _{n}]=\sum_{n\geq 1}E_{\left\{ a_{n}\right\}
}^{T}=E_{\cup \left\{ a_{n}\right\} }^{T}=\chi _{\cup \left\{ a_{n}\right\}
}\circ T$ we have $1=\Phi (I)=\chi _{\cup \left\{ a_{n}\right\} }\circ
f=\sum \chi _{f^{-1}(\left\{ a_{n}\right\} )}$ that is $\left\{ L_{E[\psi
_{n}]}\right\} $ is a partition of $\mathcal{H}$ and the vector $\psi _{0}$
lies exactly in one of the sets $\left\{ L_{E[\psi _{n}]}\right\} $.

Therefore $\sum_{n\geq 1}G(\psi _{n})=1$ for every orthonormal base $\left\{
\psi _{n}\right\} _{n\geq 1}$, then $G$ is, by definition (cfr. [G]), a
Gleason frame function of weight $1$. Since $Dim(\mathcal{H)\geq }4$ there
exists a bounded self-adjoint operator $S$ such that $\left\langle
S\right\rangle _{\psi }=G(\psi )$ for every $\psi $ in $\mathbb{S}(1)$; the
continuity of $\left\langle S\right\rangle $ implies $S=0$ or $S=I$ and then 
$G=0$ or $G=1.$ In both cases we don't have $\sum_{n\geq 1}G(\psi _{n})=1$
for an orthonormal base $\left\{ \psi _{n}\right\} _{n\geq 1}$:
contradiction.
\end{proof}

\bigskip

\section{\protect\Large The hidden mixed states}

\begin{center}
\bigskip
\end{center}

\begin{definition}
A probability measure $\mu $ defined on the pseudo-borel subsets of $%
\mathcal{H}$ will be called a \emph{hidden mixed state} on $\mathcal{H}$ if
for every couple of hidden propositions $L$ and $M$ we have $\mu (L)=\mu (M)$
when $\eta _{\mathbb{C\cdot \psi }}(L\cap \mathbb{C\cdot \psi })=\eta _{%
\mathbb{C\cdot \psi }}(M\cap \mathbb{C\cdot \psi })$ for every complex line $%
\mathbb{C\cdot \psi }$.
\end{definition}

A probability measure $\mu $ defined on the pseudo-borel subsets of $%
\mathcal{H}$ is a hidden mixed state if $\sigma (\chi _{L})=\sigma (\chi
_{M})$ implies $\mu (L)=\mu (M)$. If $\mu $ is a hidden mixed state and $\mu
^{\prime }$ is a measure taking the same values of $\mu $ on the hidden
propositions then also $\mu ^{\prime }$ is also a hidden mixed state.

The measure defined by $\mu (A)=\eta _{\mathbb{C\cdot \psi }}(A\cap \mathbb{%
C\cdot \psi })$ on the pseudoborel subsets of $\mathcal{H}$ (where $\psi
\neq 0$) is a hidden mixed state. If $\left\{ \mu _{k}\right\} _{k\geq 1}$
is a sequence of hidden mixed states and $\left\{ w_{k}\right\} _{k\geq 1}$
is a sequence of real numbers in $\left[ 0,1\right] $ with $\sum_{k}w_{k}=1$
then $\sum_{k}w_{k}\cdot \mu _{k}$ is a hidden mixed state.

\begin{lemma}
Let $D=\sum_{k}w_{k}\cdot E_{\mathbb{C\cdot \psi }_{k}}$ be a density matrix
(where $\left\{ \psi _{k}\right\} _{k\geq 1}$ is an orthonormal basis of $%
\mathcal{H}$ and $\left\{ w_{k}\right\} _{k\geq 1}$ is a sequence of real
numbers in $\left[ 0,1\right] $ with $\sum_{k}w_{k}=1$) for every bounded
self-adjoint operator $T$ we have: 
\begin{equation*}
Trace[T\cdot D]=\sum_{k}w_{k}\cdot \left\langle T\right\rangle _{\mathbb{%
\psi }_{k}}
\end{equation*}
\end{lemma}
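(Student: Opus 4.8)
The plan is to compute the trace directly using the orthonormal basis $\{\psi_k\}$ that diagonalises the density matrix. Recall that for a bounded self-adjoint operator $S$ on a separable Hilbert space, $\operatorname{Trace}[S] = \sum_k \langle S\psi_k, \psi_k\rangle$ for any orthonormal basis, and this value is basis-independent when $S$ is trace-class (which $T\cdot D$ is, since $D$ is trace-class and $T$ is bounded). So first I would write $\operatorname{Trace}[T\cdot D] = \sum_k \langle (T\cdot D)\psi_k, \psi_k\rangle$, using precisely the basis in which $D = \sum_k w_k\, E_{\mathbb{C}\cdot\psi_k}$ is diagonal.

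Next I would evaluate $(T\cdot D)\psi_k$. Since $E_{\mathbb{C}\cdot\psi_j}\psi_k = \delta_{jk}\,\psi_k$ (the $\psi_j$ being orthonormal), we get $D\psi_k = \sum_j w_j E_{\mathbb{C}\cdot\psi_j}\psi_k = w_k\,\psi_k$, hence $(T\cdot D)\psi_k = w_k\, T\psi_k$. Therefore $\langle (T\cdot D)\psi_k, \psi_k\rangle = w_k\,\langle T\psi_k, \psi_k\rangle = w_k\,\langle T\rangle_{\psi_k}$, using that $\|\psi_k\| = 1$ so that $\langle T\rangle_{\psi_k}$ is exactly $\langle T\psi_k,\psi_k\rangle$. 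Summing over $k$ gives $\operatorname{Trace}[T\cdot D] = \sum_k w_k\,\langle T\rangle_{\psi_k}$, which is the claimed identity.

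The only points requiring care are the convergence and rearrangement issues: one must justify that $T\cdot D$ is trace-class (so the trace is well-defined and basis-independent) and that the series $\sum_k w_k\langle T\rangle_{\psi_k}$ converges, which it does since $|\langle T\rangle_{\psi_k}| \le \|T\|$ and $\sum_k w_k = 1$, so the sum is absolutely convergent and bounded by $\|T\|$. I expect this is the only mild obstacle; everything else is a one-line computation exploiting the eigenbasis. I would invoke the standard trace-class theory (e.g. the product of a trace-class operator with a bounded operator is trace-class) as a known fact rather than reprove it.
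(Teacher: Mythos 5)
Your proposal is correct and is exactly the argument the paper intends: its proof consists of the single sentence ``compute the trace using the orthonormal basis $\{\psi_k\}_{k\geq 1}$,'' which you have simply carried out in detail, with the trace-class and convergence remarks as sensible extra care.
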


\begin{proof}
It is enough to compute the trace using the orthonormal basis $\left\{ \psi
_{k}\right\} _{k\geq 1}$.
\end{proof}

\begin{theorem}
Let $\mu $ be a probability measure on $\mathcal{H}$, the measure $\mu $ is
a hidden mixed state on $\mathcal{H}$ if and only if there exists exactly
one density matrix $D$ on $\mathcal{H}$ such that: $\mu (L)=Trace\left[
\sigma (\chi _{L})\cdot D\right] $ for every hidden proposition $L$.
\end{theorem}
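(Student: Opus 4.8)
The plan is to prove the two implications separately, using the spectral/structural facts about $\sigma$ and hidden propositions established in Section~2. For the ``if'' direction, suppose such a density matrix $D$ exists. Then $\mu(L)=\mathrm{Trace}[\sigma(\chi_L)\cdot D]$ for every hidden proposition $L$. If $L$ and $M$ are hidden propositions with $\eta_{\mathbb{C\cdot\psi}}(L\cap\mathbb{C\cdot\psi})=\eta_{\mathbb{C\cdot\psi}}(M\cap\mathbb{C\cdot\psi})$ for every complex line, then by the remark following Theorem~3 (the measure $\nu_{F_\psi}$ induced by $F_\psi(s)=\langle E^{\sigma(\chi_L)}_{(-\infty,s]}\rangle_\psi$ coincides with the image measure of $\eta_{\mathbb{C\cdot\psi}}$), the operators $\sigma(\chi_L)$ and $\sigma(\chi_M)$ have the same expectation value on every $\psi$, hence $\sigma(\chi_L)=\sigma(\chi_M)$ by polarization. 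Therefore $\mu(L)=\mathrm{Trace}[\sigma(\chi_L)\cdot D]=\mathrm{Trace}[\sigma(\chi_M)\cdot D]=\mu(M)$, so $\mu$ is a hidden mixed state. The uniqueness of $D$ follows because, by Corollary~1, every projector $E$ equals $\sigma(\chi_L)$ for some hidden proposition $L$, so $\mathrm{Trace}[E\cdot D]=\mu(L)$ is determined by $\mu$ for every projector, and a density matrix is determined by its traces against all projectors.

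For the ``only if'' direction, assume $\mu$ is a hidden mixed state. First I would define a candidate functional on projectors: for a projector $E$, set $m(E)=\mu(L)$ where $L$ is any hidden proposition with $\sigma(\chi_L)=E$ (such $L$ exists by Corollary~1, and $m$ is well-defined since $\mu$ is a hidden mixed state). The key structural step is to check that $m$ is a countably additive probability measure on $PR(\mathcal{H})$ in Gleason's sense: $m(I)=\mu(\mathcal{H})=1$ since $\sigma(\chi_{\mathcal{H}})=I$, and if $\{E_n\}$ are pairwise orthogonal projectors one can, using Corollary~1 and the disjointification trick from the proof of Theorem~6 (replacing $L_n$ by $L_n\setminus\bigcup_{j<n}L_j$, which does not change $\sigma(\chi_{L_n})$ up to the null-set corrections permitted in Section~2), realize them as $\sigma(\chi_{L_n})$ for pairwise disjoint hidden propositions $L_n$; then $\sigma(\chi_{\bigcup L_n})=\sum\sigma(\chi_{L_n})$ (stated after Theorem~3), so $m(\sum E_n)=\mu(\bigcup L_n)=\sum\mu(L_n)=\sum m(E_n)$ by countable additivity of $\mu$. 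Since $\mathrm{Dim}(\mathcal{H})\geq 3$ (indeed infinite), Gleason's theorem yields a unique density matrix $D$ with $m(E)=\mathrm{Trace}[E\cdot D]$ for every projector $E$, hence $\mu(L)=\mathrm{Trace}[\sigma(\chi_L)\cdot D]$ for every hidden proposition $L$. Uniqueness of $D$ is again immediate from Corollary~1.

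The main obstacle is the countable-additivity verification for $m$ — specifically, turning an orthogonal family of projectors into a disjoint family of hidden propositions with the right $\sigma$-images. One must be careful that the disjointification $L_n\mapsto L_n\setminus\bigcup_{j<n}L_j$ genuinely preserves the spectral data: since $\sigma(\chi_{L_n\cap L_j})\leq \sigma(\chi_{L_n})\cdot\sigma(\chi_{L_j})=E_nE_j=0$ when $n\neq j$, the overlaps $L_n\cap L_j$ are zero-measure sets (their characteristic functions have $\sigma$-image zero, equivalently the intersection with each line has measure zero), so removing them costs nothing, and $\sigma(\chi_{L_n\setminus\bigcup_{j<n}L_j})=\sigma(\chi_{L_n})=E_n$. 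Once this bookkeeping is in place the rest is a direct appeal to Gleason's theorem, exactly as invoked in the proof of Theorem~8.
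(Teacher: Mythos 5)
There is a genuine gap, and it sits exactly where you flagged the ``main obstacle'': the disjointification step in the only-if direction does not work. From $E_nE_j=0$ you cannot conclude that $L_n\cap L_j$ is a zero-measure set. The inequality $\sigma (\chi _{L_{n}\cap L_{j}})\leq \sigma (\chi _{L_{n}})\cdot \sigma (\chi _{L_{j}})$ is nowhere established and is false; worse, $L_n\cap L_j$ need not even be a hidden proposition (the paper explicitly warns that intersections of hidden propositions are not, in general, hidden propositions), so $\sigma (\chi _{L_{n}\cap L_{j}})$ may be undefined. Concretely: let $e_1,e_2$ be orthonormal, $E_i$ the projector on $\mathbb{C}\cdot e_i$, and $\psi =(e_1+e_2)/\sqrt{2}$. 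Any hidden propositions $L_1,L_2$ with $\sigma (\chi _{L_i})=E_i$ meet the line $\mathbb{C}\cdot \psi $ in sets of $\eta $-measure $1/2$ each, and nothing in the construction (Theorem 4 uses an independent choice of $\gamma$ and of the set $L$ line by line and operator by operator) prevents those two half-measure subsets of the line from coinciding. The only constraint coming from $E_1E_2=0$ is $\eta _{\ell }(L_1\cap \ell )+\eta _{\ell }(L_2\cap \ell )\leq 1$ on each line $\ell $, which is compatible with $L_1\cap L_2$ having positive measure. So $L_n\setminus \bigcup_{j<n}L_j$ can differ from $L_n$ by a set of positive measure, $\sigma $ of its characteristic function (if defined at all) need not be $E_n$, and the additivity computation $\mu (\bigcup L_n)=\sum \mu (L_n)$ collapses.

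The missing idea is to choose the representatives $L_n$ \emph{coherently} rather than independently and then repairing them: the paper encodes the entire orthogonal family in a single bounded self-adjoint operator $T$ with spectral family chosen so that $E_{]a_{n-1},a_{n}]}^{T}=E_{n}$ (with $a_n=1-1/n$), takes one hidden observable $f$ with $\sigma (f)=T$ via Theorem 4, and sets $L_{n}=f^{-1}(]a_{n-1},a_{n}])$. These are pairwise disjoint automatically, being preimages of disjoint Borel sets under one function, and $L=f^{-1}(]-\infty ,1[)$ is their disjoint union with $\sigma (\chi _{L})=\sum E_{n}$; countable additivity of $\mu $ then gives $\widehat{\mu }(\sum E_n)=\sum \widehat{\mu }(E_n)$ and Gleason's theorem finishes. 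The rest of your argument (well-definedness of $m$, the direct verification in the ``if'' direction via equality of expectation values, and uniqueness of $D$ from Corollary 1) is sound and consistent with the paper, but without the single-operator trick the central additivity claim is unproved.
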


\begin{proof}
$(\Longleftarrow )$ If $D=\sum_{k}w_{k}\cdot E_{\mathbb{C\cdot \psi }_{k}}$
is any density matrix (where $\left\{ \psi _{k}\right\} _{k\geq 1}$ is an
orthonormal basis of $\mathcal{H}$ and $\left\{ w_{k}\right\} _{k\geq 1}$ is
a sequence of real numbers in $\left[ 0,1\right] $ with $\sum_{k}w_{k}=1$)
then the examples above show that \ $\mu ^{\prime }(A)=\sum_{k}w_{k}\cdot
\eta _{\mathbb{C\cdot \psi }_{k}}(A\cap \mathbb{C\cdot \psi }_{k})$ (where $%
A $ is any pseudo-borel subset of $\mathcal{H}$) is a hidden mixed; since we
have $\mu ^{\prime }(L)=\sum_{k}w_{k}\cdot \left\langle \sigma (\chi
_{L})\right\rangle _{\psi _{k}}=Trace\left[ \sigma (\chi _{L})\cdot D\right]
=\mu (L)$ for every hidden proposition $L$ the measure $\mu $ is also a
hidden state.

$(\Longrightarrow )$ Since $\mu (L)=\mu (M)$ whenever $\sigma (\chi
_{L})=\sigma (\chi _{M})$ it is well defined a map $\widehat{\mu }:PR(%
\mathcal{H)\rightarrow }\left[ 0,1\right] $ on the space $PR(\mathcal{H)}$
of orthogonal projection operators by $\widehat{\mu }(E)=\mu (L)$ for $%
\sigma (\chi _{L})=E$. We will show that the map $\widehat{\mu }$ is a
measure on the closed subspaces of $\mathcal{H}$.

We need to prove that for every sequence $\left\{ E_{n}\right\} $ of
projectors with $E_{n}\cdot E_{m}=0$ whenever $n\neq m$ we have $\widehat{%
\mu }(\sum E_{n})=\sum \widehat{\mu }(E_{n})$.

Let $a_{n}=1-1/n$ for $n\geq 1$ and $a_{0}=-\infty $, consider the spectral
family defined by: $E(s)=0$ if $s<0$, $E(s)=E_{1}+\cdot \cdot \cdot +E_{n}$
if $a_{n}\leq s<a_{n+1}$ and $E(s)=I$ if $1\leq s$.

The self-adjoint operator $T$ defined by this family is bounded with $%
E_{]a_{n-1},a_{n}]}^{T}=E_{n}$ for every $n\geq 1$; let $f$ be an observable
function with $\sigma (f)=T$.

Let $L_{n}=f^{-1}]a_{n-1},a_{n}]$ for $n\geq 1$, every $L_{n}$ is a hidden
proposition with $\sigma (\chi _{L_{n}})=E_{n}$. Let $L=f^{-1}]-\infty ,1[$,
the set $L$ is a hidden proposition disjoint union of all the $\left\{
L_{n}\right\} $ with $\sigma (\chi _{L})=\sum_{n\geq 1}E_{n}$. Then $%
\widehat{\mu }(\sum_{n\geq 1}E_{n})=\mu (L)=\sum_{n\geq 1}\mu
(L_{n})=\sum_{n\geq 1}\widehat{\mu }(E_{n})$.

Therefore for the Gleason's Theorem (cfr. [G]) there exists a (unique)
density matrix $D$ such that $\widehat{\mu }(E)=Trace\left[ E\cdot D\right] $
for every projector $E$.
\end{proof}

We will denote by $\mathcal{S}$ the family of all hidden mixed states on $%
\mathcal{H}$ and by $DM(\mathcal{H})$ the set of all density matrices on $%
\mathcal{H}$, the previous theorem states there is a surjective map $\delta :%
\mathcal{S\rightarrow }DM(\mathcal{H})$ associating to a measure $\mu $ a
density matrix $\delta (\mu )$ such that $\mu (L)=Trace\left[ \sigma (\chi
_{L})\cdot \delta (\mu )\right] $ for every hidden proposition $L$.

\begin{theorem}
\bigskip For every hidden observable $f$ and every hidden mixed state $\mu $
we have: 
\begin{equation*}
Trace\left[ \sigma (f)\cdot \delta (\mu )\right] =\int f\cdot d\mu
\end{equation*}
\end{theorem}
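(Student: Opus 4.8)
The plan is to reduce the integral formula to the already-established identities for hidden propositions (Theorem~13) via standard measure-theoretic approximation, exploiting linearity of both sides in $f$ and in $\mu$.

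First I would prove the formula for simple hidden observables, i.e. functions of the form $f=\sum_{k=1}^{m}c_{k}\cdot\chi_{L_{k}}$ where $\{L_{k}\}$ are pairwise disjoint hidden propositions. By Theorem~13 we have $\mu(L_{k})=Trace\left[\sigma(\chi_{L_{k}})\cdot\delta(\mu)\right]$ for each $k$, so $\int f\cdot d\mu=\sum_{k}c_{k}\cdot\mu(L_{k})=\sum_{k}c_{k}\cdot Trace\left[\sigma(\chi_{L_{k}})\cdot\delta(\mu)\right]=Trace\left[\left(\sum_{k}c_{k}\cdot\sigma(\chi_{L_{k}})\right)\cdot\delta(\mu)\right]$ by linearity of trace, and since $\sigma$ is linear and $\sigma(\chi_{L_{k}})=\sigma(\chi_{f^{-1}(\{c_{k}\})})=E^{\sigma(f)}_{\{c_{k}\}}$ (using the spectral description of $\sigma(f)$ recorded after Definition~3 and Theorem~6), the bracket equals $\sigma(f)$. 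This handles the simple case.

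Next I would pass to a general essentially bounded hidden observable $f$ by approximation. Write $T=\sigma(f)$ with $spec[T]\subset[-A,A]$; partition $[-A,A]$ into small intervals $I_{1},\dots,I_{m}$ and set $L_{k}=f^{-1}(I_{k})$ and $f_{m}=\sum_{k}c_{k}\cdot\chi_{L_{k}}$ with $c_{k}\in I_{k}$. Each $f_{m}$ is a simple hidden observable with $\sigma(f_{m})=\sum_{k}c_{k}\cdot E^{T}_{I_{k}}$, and as the mesh goes to zero we have $\|f-f_{m}\|_{\infty}\to 0$ (so $\int f_{m}\,d\mu\to\int f\,d\mu$ since $\mu$ is a probability measure) and $\|\sigma(f_{m})-T\|\to 0$ as operators (this is just the norm convergence of Riemann-type spectral sums to $T$, which follows from the spectral theorem). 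Since $\delta(\mu)$ is trace-class, $S\mapsto Trace[S\cdot\delta(\mu)]$ is norm-continuous, so $Trace[\sigma(f_{m})\cdot\delta(\mu)]\to Trace[T\cdot\delta(\mu)]$, and the formula for $f$ follows by taking limits on both sides of the identity for $f_{m}$.

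The main obstacle I anticipate is purely book-keeping: making sure the intervals $I_{k}$ can be chosen so that the $L_{k}=f^{-1}(I_{k})$ are genuinely hidden propositions (they are, by Theorem~6, since $f$ is a hidden observable and half-open intervals are Borel) and that $\sigma(f_{m})=\sum_{k}c_{k}E^{T}_{I_{k}}$ (immediate from $\sigma(\chi_{f^{-1}(I_{k})})=\chi_{I_{k}}\circ T=E^{T}_{I_{k}}$ and linearity of $\sigma$). One should also note the integral $\int f\cdot d\mu$ is well-defined because $f$ is essentially bounded and $\mu$ is a probability measure, and that a priori $f$ is only pseudo-borel, but $\mu$ is defined on pseudo-borel sets so this is no issue. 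No deep input beyond Theorem~13 and the spectral theorem is needed; the content is entirely in the disjoint-proposition case already secured earlier.
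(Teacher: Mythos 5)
Your proof is correct, but it completes the argument by a genuinely different route from the paper's. Both arguments rest on the same key input: for a hidden observable $f$ with $\sigma (f)=T$ and a hidden mixed state $\mu $ with $\delta (\mu )=D$, each preimage $f^{-1}(B)$ is a hidden proposition with $\sigma (\chi _{f^{-1}(B)})=E_{B}^{T}$, hence $\mu (f^{-1}(B))=Trace[E_{B}^{T}\cdot D]$ (what you cite as Theorems 13 and 6 are Theorems 9 and 3 in the paper's numbering). From there the paper passes to the pushforward measure $f_{\ast }\mu =\nu _{F}$ with $F(r)=Trace[E_{(-\infty ,r]}^{T}\cdot D]$, decomposes $D=\sum_{k}w_{k}\cdot E_{\mathbb{C}\cdot \psi _{k}}$ into pure states, and reduces $\int_{\mathbb{R}}id\cdot d\nu _{F}$ to the already established pure-state identity $\int_{\mathbb{R}}id\cdot d\nu _{F_{k}}=\langle T\rangle _{\psi _{k}}$ via $\nu _{F}=\sum_{k}w_{k}\cdot \nu _{F_{k}}$. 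You instead stay at the level of $\mu $ and $D$: you approximate $f$ uniformly (off a null set) by $b_{m}\circ f$ with $b_{m}=\sum_{k}c_{k}\cdot \chi _{I_{k}}$ in $\mathcal{B}$, observe that $\sigma (b_{m}\circ f)=\sum_{k}c_{k}\cdot E_{I_{k}}^{T}\rightarrow T$ in operator norm by the spectral theorem, and use $|Trace[S\cdot D]|\leq \Vert S\Vert $ for a density matrix $D$. Your route buys independence from the eigendecomposition of $D$ and from the interchange $\nu _{F}=\sum_{k}w_{k}\cdot \nu _{F_{k}}$; the paper's route buys the reuse of the pure-state computation and the quasi-inverse machinery it has already set up. One point you should make explicit: since $f$ is only essentially bounded, the uniform estimate $|f-b_{m}\circ f|<\varepsilon $ holds only outside a pseudo-borel zero-measure set $N$; this is harmless because such an $N$ is a hidden proposition with $\sigma (\chi _{N})=0$, so $\mu (N)=Trace[0\cdot D]=0$ by Theorem 9 and $N$ contributes nothing to either side of the identity.
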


\begin{proof}
Let's write $T=\sigma (f)$ and $D=\delta (\mu )$. For every real number $r$
the projector\ associated to the hidden proposition $f^{-1}\left( -\infty ,r%
\right] $ is $\sigma (\chi _{f^{-1}\left( -\infty ,r\right] })=\chi _{\left(
-\infty ,r\right] }\circ T=E_{\left( -\infty ,r\right] }^{T}$, therefore $%
\mu (f^{-1}\left( -\infty ,r\right] )=Trace\left[ E_{\left( -\infty ,r\right]
}^{T}\cdot D\right] $.

Remembering the properties of ch. 8 in [K and S] we have: $\int_{\mathcal{H}%
}f\cdot d\mu =\int_{\mathbb{R}}id\cdot df_{\ast }\mu =\int_{\mathbb{R}%
}id\cdot d\nu _{F}$ where $F(r)=f_{\ast }\mu \left( -\infty ,r\right] =\mu
(f^{-1}\left( -\infty ,r\right] )=Trace\left[ E_{\left( -\infty ,r\right]
}^{T}\cdot D\right] $.

When $D=E_{\mathbb{C\cdot \psi }_{k}}$ we get $F_{k}(r)=Trace\left[
E_{\left( -\infty ,r\right] }^{T}\cdot E_{\mathbb{C\cdot \psi }_{k}}\right]
=\left\langle E_{\left( -\infty ,r\right] }^{T}\right\rangle _{\psi _{k}}$
and $\int_{\mathcal{H}}f\cdot d\mu =\int_{\mathbb{R}}id\cdot d\nu
_{F_{k}}=\left\langle T\right\rangle _{\psi _{k}}=Trace\left[ T\cdot E_{%
\mathbb{C\cdot \psi }_{k}}\right] $.

For a general $D=\sum_{k}w_{k}\cdot E_{\mathbb{C\cdot \psi }_{k}}$ (where $%
\left\{ \psi _{k}\right\} _{k\geq 1}$ is an orthonormal basis of $\mathcal{H}
$ and $\left\{ w_{k}\right\} _{k\geq 1}$ is a sequence of real numbers in $%
\left[ 0,1\right] $ with $\sum_{k}w_{k}=1$) we have $F(r)=\sum_{k}w_{k}\cdot
\left\langle E_{\left( -\infty ,r\right] }^{T}\right\rangle _{\psi _{k}}$,
that is: $F=\sum_{k}w_{k}\cdot F_{k}$, and $\nu _{F}=\sum w_{k}\cdot \nu
_{F_{k}}$, therefore: $Trace\left[ T\cdot D\right] =\sum_{k}w_{k}\cdot Trace%
\left[ T\cdot E_{\mathbb{C\cdot \psi }_{k}}\right] =\sum_{k}w_{k}\cdot
\left\langle T\right\rangle _{\psi _{k}}=\sum_{k}w_{k}\cdot \int_{\mathbb{R}%
}id\cdot d\nu _{F_{k}}=\int_{\mathbb{R}}id\cdot d\nu _{F}$ and this proves
the equality.
\end{proof}

\begin{corollary}
\bigskip \bigskip For every hidden observable $f$ , every hidden mixed state 
$\mu $ and every $b$ in $\mathcal{B}$ we have : 
\begin{equation*}
Trace\left[ b\circ \sigma (f)\cdot \delta (\mu )\right] =\int b\circ f\cdot
d\mu
\end{equation*}
\end{corollary}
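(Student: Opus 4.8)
The plan is to reduce the corollary to the theorem that immediately precedes it by the standard device of replacing the observable $f$ by $b\circ f$. Recall that by definition of a hidden observable, $b\circ f$ is again a hidden observable whenever $b$ is in $\mathcal{B}$, and moreover $\sigma(b\circ f)=b\circ\sigma(f)$; this is precisely the content of the second definition of section 2, and it was reconfirmed in the proof that $f^{-1}(B)$ hidden for all $B$ characterises hidden observables. So the first step is simply to observe that $g:=b\circ f$ is a legitimate hidden observable to which the previous theorem applies.

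Applying that theorem to $g$ and to the same hidden mixed state $\mu$ gives
\begin{equation*}
Trace\left[\sigma(b\circ f)\cdot\delta(\mu)\right]=\int (b\circ f)\cdot d\mu .
\end{equation*}
Now substitute the identity $\sigma(b\circ f)=b\circ\sigma(f)$ on the left-hand side, and the corollary follows verbatim. The right-hand side needs no manipulation: $\int b\circ f\,d\mu$ is already the stated expression.

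There is essentially no obstacle here; the only point requiring a word is that one must be sure $b\circ f$ really does satisfy the hypotheses of the preceding theorem, i.e.\ that it is an \emph{essentially bounded} pseudo-borel function with orthodox mean values and hence lands in $\mathcal{O}$. Essential boundedness is automatic because $\sigma(f)=T$ is bounded, so $f$ takes values essentially in $spec[T]$ (the remark after the seventh theorem), $b$ sends the bounded set $spec[T]$ into a bounded set by definition of $\mathcal{B}$, and altering $f$ on a null set does not disturb any of this. The pseudo-borel and orthodox-mean-value properties, together with $\sigma(b\circ f)=b\circ\sigma(f)$, are exactly the defining clauses of membership in $\mathcal{O}$, which $f$ satisfies by assumption. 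Hence $b\circ f\in\mathcal{O}$ and the previous theorem is indeed applicable, completing the argument.
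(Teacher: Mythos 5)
Your proof is correct and is exactly the paper's argument: the paper's entire proof reads ``Apply the previous theorem to $b\circ f$,'' relying, as you do, on the defining property $\sigma(b\circ f)=b\circ\sigma(f)$ of hidden observables. Your extra paragraph checking that $b\circ f$ genuinely lies in $\mathcal{O}$ is a harmless (and slightly more careful) elaboration of what the paper leaves implicit.
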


\begin{proof}
\bigskip Apply the previous theorem to $b\circ f$.
\end{proof}

\section{\protect\bigskip {\protect\Large A uniqueness theorem}}

\begin{center}
\bigskip
\end{center}

\begin{definition}
A theory with hidden variables (relative to a Hilbert space $\mathcal{H}$)
is given assigning:

\begin{itemize}
\item  a set $\Lambda $ (the hidden variables space)

\item  a surjective map $\pi :\Lambda \rightarrow \mathcal{P(H)}$ (defining
for each $\left[ \psi \right] $ in $\mathcal{P(H)}$ the fiber $\Lambda _{%
\left[ \psi \right] }=\pi ^{-1}(\left[ \psi \right] )$)

\item  on each fiber $\Lambda _{\left[ \psi \right] }$ a $\sigma -$algebra $%
\mathcal{M}_{\left[ \psi \right] }$ of subsets and a measure $\mu _{\left[
\psi \right] }$ making $\Lambda _{\left[ \psi \right] }$ a standard
non-atomic probability space

\item  a set $\mathcal{G}$ of real functions on $\Lambda $ (the hidden
observables) pseudo-measurables (that is $f^{-1}(B)\cap \Lambda _{\left[
\psi \right] }$ is a measurable subset of $\Lambda _{\left[ \psi \right] }$
for every $f$ in $\mathcal{G}$ and every borel subset $B$ in the real line)
and essentially bounded (that is each $f$ is bounded out of a suitable
subset $N_{f}$ of $\Lambda $ with $\mu _{\left[ \psi \right] }(N_{f}\cap
\Lambda _{\left[ \psi \right] })=0$ for each $\left[ \psi \right] $ in $%
\mathcal{P(H)}$)

\item  a surjective map $\beta :\mathcal{G\rightarrow }$ $\mathcal{B}_{sa}(%
\mathcal{H})$
\end{itemize}

such that: $\mu _{\left[ \psi \right] }(f^{-1}(B)\cap \Lambda _{\left[ \psi %
\right] })=\left\langle E_{B}^{\beta (f)}\right\rangle _{\psi }$ for every $%
f $ in $\mathcal{G}$, every borel subset $B$ in the real line and every $%
\left[ \psi \right] $ in $\mathcal{P(H)}$.
\end{definition}

\begin{itemize}
\item  Obviously the datum of $\mathcal{H}^{\cdot }=\mathcal{H\setminus }%
\left\{ 0\right\} $, of the canonical map $q:\mathcal{H}^{\cdot }\rightarrow 
\mathcal{P(H)}$, of the sets $\mathbb{C}_{\left[ \psi \right] }^{\cdot }=(%
\mathbb{C}\mathcal{\setminus }\left\{ 0\right\} )\cdot \psi $ with the
measures $\eta _{\left[ \psi \right] }$ and the set of functions $\mathcal{O}
$ with the map $\sigma $ defined in the previous sections is a hidden
variable theory

\item  For simplicity we consider on the fibers $\Lambda _{\left[ \psi %
\right] }$ the most natural structure of probability space; moreover the
functions in $\mathcal{G}$ are taken essentially bounded otherwise we would
need to deal with non-bounded self-adjoint operators.

\item  two pseudo-measurables functions on $\Lambda $ will be considered
equal if they coincide out of a pseudo-measurable subset of $\Lambda $.
\end{itemize}

\begin{definition}
Two pseudo-measurable and essentially bounded functions $f_{1}$ and $f_{2}$
on $\Lambda $ will be called statistically equivalent if $\mu _{\left[ \psi %
\right] }(f_{1}^{-1}(B)\cap \Lambda _{\left[ \psi \right] })=\mu _{\left[
\psi \right] }(f_{2}^{-1}(B)\cap \Lambda _{\left[ \psi \right] })$ for every
borel subset $B$ in the real line and every $\left[ \psi \right] $ in $%
\mathcal{P(H)}$; the family $\mathcal{G}$ of a hidden variable theory will
be called maximal if whenever $\mathcal{G}$ contains a function it contains
also all its statistically equivalent functions.
\end{definition}

\begin{itemize}
\item  if $f_{1}$ and $f_{2}$ in $\mathcal{G}\ $are statistically equivalent
then $\beta (f_{1})=\beta (f_{2})$

\item  the family $\mathcal{G}$ can always be extended to a maximal family $%
\widetilde{\mathcal{G}}$ and the map $\beta $ can be extended in a unique
way to a map $\widetilde{\beta }$ in such a way to have the same value on
statistically equivalent functions. Considering this new family $\widetilde{%
\mathcal{G}}$ instead of $\mathcal{G}$ and the map $\widetilde{\beta }$
instead of $\beta $ we get a new hidden variables theory.
\end{itemize}

\begin{theorem}
The family $\mathcal{O}$ is maximal.
\end{theorem}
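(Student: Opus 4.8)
The plan is to show that $\mathcal{O}$ contains every function statistically equivalent to a given hidden observable. So let $f \in \mathcal{O}$ with $\sigma(f) = T$, and let $g$ be a pseudo-borel, essentially bounded function that is statistically equivalent to $f$, meaning $\mu_{[\psi]}(g^{-1}(B) \cap \mathbb{C}\cdot\psi) = \mu_{[\psi]}(f^{-1}(B) \cap \mathbb{C}\cdot\psi)$ for every borel $B \subset \mathbb{R}$ and every complex line. The goal is to prove $g \in \mathcal{O}$ (with necessarily $\sigma(g) = T$). The natural route is to invoke Theorem~3: it suffices to check that $\int_{\mathbb{C}\cdot\psi} g^n \, d\eta_{\mathbb{C}\cdot\psi} = \langle T^n \rangle_\psi$ for every $\psi \neq 0$ and every $n \geq 0$.

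First I would observe that statistical equivalence is exactly the statement that the image measures agree: $(g|_{\mathbb{C}\cdot\psi})_* \eta_{\mathbb{C}\cdot\psi} = (f|_{\mathbb{C}\cdot\psi})_* \eta_{\mathbb{C}\cdot\psi}$ on the borel sets of $\mathbb{R}$, since a borel measure on $\mathbb{R}$ is determined by its values on borel sets $B$. Then, by the change-of-variables (image measure) formula, for any $b \in \mathcal{B}$ one has
\begin{equation*}
\int_{\mathbb{C}\cdot\psi} b\circ g \, d\eta_{\mathbb{C}\cdot\psi} = \int_{\mathbb{R}} b \, d\big[(g|_{\mathbb{C}\cdot\psi})_*\eta_{\mathbb{C}\cdot\psi}\big] = \int_{\mathbb{R}} b \, d\big[(f|_{\mathbb{C}\cdot\psi})_*\eta_{\mathbb{C}\cdot\psi}\big] = \int_{\mathbb{C}\cdot\psi} b\circ f \, d\eta_{\mathbb{C}\cdot\psi}.
\end{equation*}
Applying this with $b(x) = x^n$ (which lies in $\mathcal{B}$), and using that $f \in \mathcal{O}$ so that $\int_{\mathbb{C}\cdot\psi} f^n \, d\eta_{\mathbb{C}\cdot\psi} = \langle \sigma(f)^n \rangle_\psi = \langle T^n \rangle_\psi$, gives $\int_{\mathbb{C}\cdot\psi} g^n \, d\eta_{\mathbb{C}\cdot\psi} = \langle T^n \rangle_\psi$ for all $n \geq 0$ and all $\psi$. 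By Theorem~3 this shows $g$ is a hidden observable with $\sigma(g) = T$, hence $g \in \mathcal{O}$.

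The only genuinely delicate point is the bookkeeping that $g$ is indeed pseudo-borel and essentially bounded so that Theorem~3 applies and the integrals make sense — but essential boundedness and pseudo-measurability are part of the hypothesis built into ``statistically equivalent,'' and since $g$ differs statistically from the essentially bounded $f$ with the same line-wise image measures, $g$ is also essentially bounded (its line-wise distribution is supported in $\mathrm{spec}[T] \subset [-A,A]$ up to null sets). So there is no real obstacle; the argument is essentially a one-line reduction to Theorem~3 via the image-measure identity. I would state it compactly: statistical equivalence transports the defining moment identities of Theorem~3 verbatim.
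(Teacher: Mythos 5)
Your proof is correct and follows essentially the same route as the paper: both arguments rest on the observation that statistical equivalence is exactly the equality of the line-wise image measures $(g\mid_{\mathbb{C}\cdot\psi})_{\ast}\eta_{\mathbb{C}\cdot\psi}=(f\mid_{\mathbb{C}\cdot\psi})_{\ast}\eta_{\mathbb{C}\cdot\psi}=\nu_{F_{\psi}}$, which transports the defining integral identities from $f$ to $g$. The only cosmetic difference is that you verify the moment criterion of the first characterization theorem, while the paper verifies orthodox mean values together with the fact that each $g^{-1}(B)$ is a hidden proposition; since $g$ is pseudo-measurable and essentially bounded by the very definition of statistical equivalence, either characterization applies and your reduction goes through.
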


\begin{proof}
Let $f$ be a function in $\mathcal{O}$ and let $g$ be a (pseudo-measurable
and essentially bounded) function on $\mathcal{H}^{\cdot }$ statistically
equivalent to $f$; to prove that $g$ is also in $\mathcal{O}$ we have to
show that $g$ has horthodox mean values and that each $g^{-1}(B)$ is a
hidden proposition for every borel subset $B$ in the real line.

Let $T=\sigma (f)$, fixed $\psi $ in $\mathcal{H}^{\cdot }$ let's define $F:%
\mathbb{R\rightarrow }\left[ 0,1\right] $ by $F(r)=\left\langle E_{(-\infty
,r]}^{T}\right\rangle _{\psi }$. Since $T$ is bounded the measure $\nu _{F}$
has support inside a suitable bounded interval$.$ Denoted by $f|$ and $g|$
the restrictions of $f$ and $g$ to $\mathbb{C}_{\left[ \psi \right] }^{\cdot
}$ we have $g|_{\ast }\eta _{\left[ \psi \right] }=f|_{\ast }\eta _{\left[
\psi \right] }=\nu _{F}$; because the identity function in $\mathbb{R}$ is
absolutely integrable with respect to $\nu _{F}$ the function $g|$ is
absolutely integrable with respect to $\eta _{\left[ \psi \right] }$ (cfr.
Cor. 3 pag. 93 of [K and S]) and $\ \ \int_{\mathbb{C}_{\left[ \psi \right]
}^{\cdot }}g|$ $\cdot d\eta _{\left[ \psi \right] }=\int_{\mathbb{R}}id\cdot
d\nu _{F}=\int_{\mathbb{R}}id\cdot d\nu _{\left\langle E_{(-\infty
,r]}^{T}\right\rangle _{\psi }}=\left\langle T\right\rangle _{\psi }$. This
proves that $g$ has horthodox mean values.

Fixed a borel subset $B$ in the real line since the characteristic function $%
\chi _{g^{-1}(B)}$ has mean values: $\int_{\mathbb{C}_{\left[ \psi \right]
}^{\cdot }}\chi _{g^{-1}(B)}\cdot d\eta _{\left[ \psi \right] }=\eta _{\left[
\psi \right] }(f^{-1}(B)\cap \mathbb{C}_{\left[ \psi \right] }^{\cdot
})=\left\langle E_{B}^{T}\right\rangle _{\psi }$ given by the projector $%
E_{B}^{T}$ the set $g^{-1}(B)$ is a hidden proposition.
\end{proof}

\bigskip

\begin{definition}
An isomorphism (mod $0$) between two hidden variables theories \ \ \ \ \ ($%
\Lambda $, $\pi $, $\mu _{\cdot }$, $\mathcal{G}$, $\beta $) and ($\Lambda
^{\prime }$, $\pi ^{\prime }$, $\mu _{\cdot }^{\prime }$, $\mathcal{G}%
^{\prime }$, $\beta ^{\prime }$) is given by a map$\ \Phi :\Lambda \setminus
N\rightarrow \Lambda ^{\prime }\setminus N^{\prime }$(where $N$ and $%
N^{\prime }$ are pseudo-measurable null subsets respectively of $\Lambda $
and $\Lambda ^{\prime }$) with the following properties:

\begin{itemize}
\item  $\Phi $ is bijective

\item  $\pi =\pi ^{\prime }\circ \Phi $ (and therefore $\Phi (\Lambda _{%
\left[ \psi \right] })\subset \Lambda _{\left[ \psi \right] }^{\prime }$ for
every $\left[ \psi \right] $ in $\mathcal{P(H)}$)

\item  $\Phi |:\Lambda _{\left[ \psi \right] }\setminus N^{{}}\rightarrow
\Lambda _{\left[ \psi \right] }^{\prime }\setminus N^{\prime }$ is a measure
preserving borel equivalence ($\Phi |_{\ast }\mu _{\left[ \psi \right] }=\mu
_{\left[ \psi \right] }^{\prime }$) for every $\left[ \psi \right] $ in $%
\mathcal{P(H)}$

\item  $\Phi _{\ast }\mathcal{G\subset G}^{\prime }$ (where $\Phi _{\ast
}(f)=f\circ \Phi ^{-1}$) and $\Phi _{\ast }:\mathcal{G\rightarrow G}^{\prime
}$ is a bijective map
\end{itemize}
\end{definition}

\bigskip

Note that an isomorphism (mod $0$) $\Phi :\Lambda \setminus N\rightarrow
\Lambda ^{\prime }\setminus N^{\prime }$ automatically verifies the
condition: $\beta =\beta ^{\prime }\circ \Phi _{\ast }$. In fact taken $f $
in $\mathcal{G}$ let $f^{\prime }=\Phi _{\ast }(f)$, $T=\beta (f)$ and $%
T^{\prime }=\beta ^{\prime }(f^{\prime })$ we have: $\left\langle
E_{B}^{T}\right\rangle _{\psi }=\mu _{\left[ \psi \right] }(f^{-1}(B)\cap
\Lambda _{\left[ \psi \right] })=\mu _{\left[ \psi \right] }^{\prime
}((f^{\prime })^{-1}(B)\cap \Lambda _{\left[ \psi \right] }^{\prime
})=\left\langle E_{B}^{T^{\prime }}\right\rangle _{\psi }$ for every $\left[
\psi \right] $ in $\mathcal{P(H)}$ and every borel subset $B$ in the real
line, therefore $T=T^{\prime }$.

\bigskip

\begin{theorem}
\bigskip Two hidden variables theories with maximal spaces of hidden
observables are isomorphic (mod $0$).
\end{theorem}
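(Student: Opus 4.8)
The plan is to build the isomorphism fiber by fiber and then to observe that maximality of $\mathcal{G}$ and $\mathcal{G}'$ forces any such construction to transport one family of hidden observables onto the other. Throughout, the two theories $(\Lambda,\pi,\mu_\cdot,\mathcal{G},\beta)$ and $(\Lambda',\pi',\mu'_\cdot,\mathcal{G}',\beta')$ are taken relative to the same Hilbert space $\mathcal{H}$, so they share the base $\mathcal{P(H)}$ and the common target $\mathcal{B}_{sa}(\mathcal{H})$ of $\beta$ and $\beta'$.

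First I would use that, by definition, every fiber $\Lambda_{[\psi]}$ and every fiber $\Lambda'_{[\psi]}$ is a standard non-atomic probability space, hence both are isomorphic mod $0$ to $(]0,1[,\lambda)$; this is the classical measure-isomorphism theorem for standard non-atomic spaces, the same kind of fact already used in the paper for $(\mathbb{C},\eta)$. Composing, for each $[\psi]$ one obtains a measure-preserving borel equivalence mod $0$ $\Phi_{[\psi]}\colon\Lambda_{[\psi]}\setminus N_{[\psi]}\to\Lambda'_{[\psi]}\setminus N'_{[\psi]}$, with $N_{[\psi]}$ and $N'_{[\psi]}$ measurable null in the respective fibers. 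Setting $N=\bigcup_{[\psi]}N_{[\psi]}$ and $N'=\bigcup_{[\psi]}N'_{[\psi]}$, each fiber intersection recovers $N_{[\psi]}$, resp. $N'_{[\psi]}$, which is measurable and null, so $N$ and $N'$ are pseudo-measurable null subsets of $\Lambda$ and $\Lambda'$. Gluing the maps $\Phi_{[\psi]}$ produces a bijection $\Phi\colon\Lambda\setminus N\to\Lambda'\setminus N'$ with $\pi=\pi'\circ\Phi$ and with every fiber restriction a measure-preserving borel equivalence. At this point all requirements in the definition of an isomorphism (mod $0$) are met except the inclusion $\Phi_*\mathcal{G}\subset\mathcal{G}'$ and the bijectivity of $\Phi_*\colon\mathcal{G}\to\mathcal{G}'$.

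The heart of the argument is that these two remaining conditions hold automatically, irrespective of how the fiber isomorphisms were chosen. Given $f\in\mathcal{G}$, put $T=\beta(f)$ and $f'=f\circ\Phi^{-1}$. On the fiber over $[\psi]$ one has $f'|_{\Lambda'_{[\psi]}}=(f|_{\Lambda_{[\psi]}})\circ(\Phi_{[\psi]})^{-1}$, which is pseudo-measurable (a measurable function composed with the borel inverse of $\Phi_{[\psi]}$) and essentially bounded (since $\Phi_{[\psi]}$ carries fiber null sets to fiber null sets). Because $\Phi_{[\psi]}$ is measure preserving, the image measure $(f'|_{\Lambda'_{[\psi]}})_*\mu'_{[\psi]}$ equals $(f|_{\Lambda_{[\psi]}})_*\mu_{[\psi]}$, which by the defining relation of the first theory is the borel measure $B\mapsto\langle E_B^T\rangle_\psi$. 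Using that $\beta'$ is onto, pick $g'\in\mathcal{G}'$ with $\beta'(g')=T$; the defining relation of the second theory gives $(g'|_{\Lambda'_{[\psi]}})_*\mu'_{[\psi]}(B)=\langle E_B^T\rangle_\psi$ as well. Hence $f'$ and $g'$ have identical fiber distributions for every $[\psi]$, i.e. they are statistically equivalent, and maximality of $\mathcal{G}'$ with $g'\in\mathcal{G}'$ forces $f'\in\mathcal{G}'$. This gives $\Phi_*\mathcal{G}\subset\mathcal{G}'$; the symmetric argument, using surjectivity of $\beta$ and maximality of $\mathcal{G}$, gives $(\Phi^{-1})_*\mathcal{G}'\subset\mathcal{G}$, so $\Phi_*\colon\mathcal{G}\to\mathcal{G}'$ is a bijection. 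Therefore $\Phi$ is an isomorphism (mod $0$), and by the remark following the definition the identity $\beta=\beta'\circ\Phi_*$ then holds automatically.

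The main obstacle here is conceptual rather than computational: one must see that the particular fiber isomorphisms are immaterial, because the families of hidden observables are only determined up to statistical equivalence while $\beta$ and $\beta'$ are surjective onto $\mathcal{B}_{sa}(\mathcal{H})$, so any fiberwise measure-preserving gluing automatically sends $\mathcal{G}$ onto $\mathcal{G}'$. The remaining points are routine verifications: that the union of the fiberwise exceptional sets is pseudo-measurable and null (both being fiberwise notions), and that $f\circ\Phi^{-1}$ stays pseudo-measurable and essentially bounded (immediate, $\Phi$ being a measure-preserving borel equivalence on each fiber). Notably no global measurable structure on $\Lambda$ is ever invoked, which is precisely why the otherwise arbitrary gluing does no damage.
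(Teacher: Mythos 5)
Your proposal is correct and follows essentially the same route as the paper: fiberwise measure isomorphisms from the standard non-atomic structure, gluing over the null sets $N=\bigcup N_{[\psi]}$ and $N'=\bigcup N'_{[\psi]}$, and then invoking surjectivity of $\beta'$ together with maximality of $\mathcal{G}'$ to show that $\Phi_*(f)$ is statistically equivalent to some $g'\in\mathcal{G}'$ and hence lies in $\mathcal{G}'$, with the symmetric argument giving bijectivity of $\Phi_*$. The extra details you supply (factoring through $(]0,1[,\lambda)$, checking pseudo-measurability and essential boundedness of $f\circ\Phi^{-1}$) are consistent with, and slightly more explicit than, the paper's own write-up.
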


\begin{proof}
For each $\left[ \psi \right] $ in $\mathcal{P(H)}$ since $\Lambda _{\left[
\psi \right] }$ and $\Lambda _{\left[ \psi \right] }^{\prime }$ are standard
non-atomic probability spaces there exists an isomorphism (mod $0$) (a
measure preserving borel equivalence) $\Phi _{\left[ \psi \right] }:\Lambda
_{\left[ \psi \right] }\setminus N_{\left[ \psi \right] }^{{}}\rightarrow
\Lambda _{\left[ \psi \right] }^{\prime }\setminus N_{\left[ \psi \right]
}^{\prime }$(where $N_{\left[ \psi \right] }^{{}}$ and $N_{\left[ \psi %
\right] }^{\prime }$ are measurable null subsets respectively of $\Lambda _{%
\left[ \psi \right] }$and $\Lambda _{\left[ \psi \right] }^{\prime }$).
Taken $N=\bigcup N_{\left[ \psi \right] }^{{}}$ and $N^{\prime }=\bigcup N_{%
\left[ \psi \right] }^{\prime }$ and defined $\Phi :\Lambda \setminus
N\rightarrow \Lambda ^{\prime }\setminus N^{\prime }$ by $\Phi (\lambda
)=\Phi _{\left[ \psi \right] }(\lambda )$ when $\lambda $ is in $\Lambda _{%
\left[ \psi \right] }\setminus N_{\left[ \psi \right] }^{{}}$ we get a map
veryfying the first three conditions of an isomorphism (mod $0$)$.$

Taken $f$ in $\mathcal{G}$the function $f^{\prime }=\Phi _{\ast }(f)$ is
pseudo-measurable and essentially bounded on $\Lambda ^{\prime }$,
considered $T=\beta (f)$ and choosen $g^{\prime }$in $\mathcal{G}^{\prime }$
such that $\beta ^{\prime }(g^{\prime })=T$ \ let's prove that $f^{\prime }$
is statistically equivalent to $g^{\prime }$, the maximality of $\mathcal{G}%
^{\prime }$ will imply then that also $f^{\prime }$ is in $\mathcal{G}%
^{\prime }$.

We have in fact: $\mu _{\left[ \psi \right] }^{\prime }((g^{\prime
})^{-1}(B)\cap \Lambda _{\left[ \psi \right] }^{\prime })=\left\langle
E_{B}^{T^{\prime }}\right\rangle _{\psi }=\mu _{\left[ \psi \right]
}(f^{-1}(B)\cap \Lambda _{\left[ \psi \right] })=\mu _{\left[ \psi \right]
}^{\prime }((f^{\prime })^{-1}(B)\cap \Lambda _{\left[ \psi \right]
}^{\prime })$ for every $\left[ \psi \right] $ in $\mathcal{P(H)}$ and every
borel subset $B$ in the real line. For the generality of $f$ this means: $%
\Phi _{\ast }\mathcal{G\subset G}^{\prime }$, in an analogous way we can
prove that $(\Phi ^{-1})_{\ast }\mathcal{G}^{\prime }\mathcal{\subset G}$.
Since $\Phi _{\ast }\circ (\Phi ^{-1})_{\ast }=id_{\mathcal{G}^{\prime }}$
and $(\Phi ^{-1})_{\ast }\circ \Phi _{\ast }=id_{\mathcal{G}}$ the map $\Phi
_{\ast }$ is bijective.
\end{proof}

\bigskip

Therefore every hidden variables theory is isomorphic (mod $0$) to the
theory \ \ \ \ \ \ \ \ \ ( $\mathcal{H}^{\cdot }$, $q$, $\eta _{\cdot }$, $%
\mathcal{O}$, $\sigma $) developped in the previous sections if its space of
hidden observables is maximal otherwise it is isomorphic (mod $0$) to a
theory ( $\mathcal{H}^{\cdot }$, $q$, $\eta _{\cdot }$, $\mathcal{O}^{\prime
}$, $\sigma $) with $\mathcal{O}^{\prime }\subset \mathcal{O}$.

\bigskip

\begin{center}
\bigskip {\Large Bibliography:}
\end{center}

[B] G. Boniolo(a cura di): Filosofia della fisica.

\ \ \ \ \ \ Bruno Mondadori, Milano 1997

\bigskip

[C] \ \ A. Cassa: Quantum physical systems as classical systems.

\ \ \ \ \ \ \ \ \ J. Math. Phys. -2001-vol. 42, N. 11- p. 5143-49

\bigskip

[F1] A. Fine, Hidden Variables, Joint Probability, and the Bell Inequalities

\ \ \ \ \ \ \ \ Phys. Rev. Lett. 48, 291 (1982)

\bigskip

[F2] A. Fine: Joint distributions quantum correlations and commuting
observables

\ \ \ \ \ \ \ \ J. Math. Phys. 23(7) 1306-1310

\bigskip

[G] \ A.M. Gleason: Measures on the closed subspaces of a Hilbert space.

\ \ \ \ \ \ \ \ J. Math. and Mech. -1957- N.6, p. 123-133

\bigskip

[G-D] G.C. Ghirardi, F. De Stefano: Il mondo quantistico, una realt\`{a}
ambigua. in Ambiguit\`{a} (a cura di) C. Magris

\ \ \ \ \ \ \ \ \ \ \ Longo, Moretti e Vitali Bergamo 1996

\bigskip

[J-P] J. M. Jauch and C. Piron: Can hidden variables be excluded in quantum
mechanics?

\ \ \ \ \ \ \ \ Helv. Phys. Acta 36, 827-837 (1963)

\bigskip

[K-S] S. Kochen \& E.P. Specker: The problem of hidden variables in quantum
mechanics.

\ \ \ \ \ \ \ \ \ \ J. Math. Mechan.17 (1967) 5987

\bigskip

[K and S] \ \ J.L. Kelley and T.P. Srinivasan: Measure and Integral - Vol. 1.

\ \ \ \ \ \ \ \ \ \ \ \ \ \ \ \ \ \ Springer-Verlag, NY 1988

\bigskip

[M1] N. D. Mermin, Simple Unified Form for the Major No-Hidden-Variables
Theorems.

\ \ \ \ \ \ \ \ \ Phys. Rev. Lett. 65 3373 (1990)

\bigskip

[M2] N. D. Mermin, Hidden variables and the two theorems of John Bell.

\ \ \ \ \ \ \ \ \ Rev. Mod. Phys. 65 803 (1993)

\bigskip

[P] A.Peres: Two simple proofs of the Kochen Specker theorem.

\ \ \ \ \ \ J.Phys. A (math. gen.) 24,4 (1991)

\bigskip

[V] \ \ V.S. Varadarajan: Geometry of quantum theory. Vol. I

\ \ \ \ \ \ \ \ Van Nostrand, Princeton 1984

\bigskip

[VN] J. von Neumann: Mathematical foundations of quantum mechanics.

\ \ \ \ \ \ \ \ \ Princeton univ. press, Princeton NJ 1955

\bigskip

[W] \ \ J. Weidmann: Linear operators in Hilbert spaces.

\ \ \ \ \ \ \ \ \ Springer-Verlag, NY 1980

\bigskip

\end{document}